\newcommand{\spc}{\quad \quad \quad}
\def\be{\begin{equation}}
\def\ee{\end{equation}}
\def\beq{\begin{eqnarray}}
\def\eeq{\end{eqnarray}}
\theoremstyle{definition}
\newtheorem{definition}{Definition}
\theoremstyle{theorem}
\newtheorem{theorem}{Theorem}
\begin{document}
\title{Can we make sense of dissipation without causality?}
\author{L.~Gavassino}
\affiliation{Nicolaus Copernicus Astronomical Center, Polish Academy of Sciences, ul. Bartycka 18, 00-716 Warsaw, Poland}

\begin{abstract}
Relativity opens the door to a counter-intuitive fact: a state can be stable to perturbations in one frame of reference, and unstable in another one. For this reason, the job of testing the stability of states that are not Lorentz-invariant can be very cumbersome. We show that two observers can disagree on whether a state is stable or unstable only if the perturbations can exit the light-cone. Furthermore, we show that, if a perturbation exits the light-cone, and its intensity changes with time, due to dissipation, then there are always two observers that disagree on the stability of the state. Hence, ``stability'' is a Lorentz-invariant property of dissipative theories if and only if the principle of causality is respected. We present 14 applications to physical problems from all areas of relativistic physics, ranging from theory to simulation.
\end{abstract}

\maketitle

\section{Introduction}


Deterministic field theories (such as hydrodynamics, classical electrodynamics and general relativity) find application in all areas of physics, ranging from condensed matter physics to string theory. Recently, the whole area of classical field theory is receiving a new boost, due to the experimental advances, such as the discovery of the Quark-Gluon Plasma at RHIC and LHC \cite{QGPreview2017} and the now common-place detection of GW-mergers from compact objects by LIGO, Virgo and KAGRA \cite{Liguz2019}, which have driven the development of an ever increasing number of fluid-like theories, to describe exotic phenomena of all kinds \cite{Heller2014,Sadoogi2018,Florskoski2019,
GavassinoKhalatnikov2021,GavassinoQuasiHydro2022}. Most notably, relativistic dissipative hydrodynamics is becoming a standard tool in the study of a host of physical problems, from high-energy physics \cite{FlorkowskiReview2018} to astrophysics \cite{Shibatuz2017,CamelioBulk1_2022,CamelioBulk2_2022}. 

The search for the ``correct'' field theory for describing a given phenomenon typically involves formulating a large number of alternative candidate theories, many of which are then ruled out, or proven to be equivalent to others. Usually, there is so much freedom in the construction of a phenomenological theory, that it is easy to get lost in the landscape of alternative formulations. For example, there are at least 11 different formulations of relativistic viscous hydrodynamics  \cite{Eckart40,landau6,Israel_Stewart_1979,LindblomRelaxation1996,Liu1986,
carter1991,BemficaDNDefinitivo2020,BaierRom2008,
Denicol2012Boltzmann,SrickerOttinger2019,VanStableFirst2012}, 7 formulations of superfluid hydrodynamics
\cite{khalatnikov_book,prix2004,cool1995,lebedev1982,
Son2001,koby2018PhRvC,Gusakov08}, and 6 formulations of radiation hydrodynamics
\cite{Thomas1930,Weinberg1971,UdeyIsrael1982,
AnileRadiazion1992,GavassinoRadiazione,Farris2008}. However, in a relativistic setting, all this freedom comes at a price: most of the theories that one can formulate lead to completely unphysical predictions \cite{Hiscock_Insatibility_first_order}. For example, since flow of energy equals density of momentum, in some (unphysical) theories, a fluid can spontaneously accelerate, departing form equilibrium, and pushing heat in the opposite direction to conserve the total momentum \cite{Hishcock1988,GavassinoLyapunov_2020}. Pathologies of this kind constitute a serious problem for numerical simulations, because unphysical artefacts cannot be separated from physical effects.

Luckily, there is a standard procedure that allows us to the test the reliability of a relativistic theory and rule out a considerable fraction of candidate theories: the causality-stability assessment. The idea is simple: a theory can be considered reliable only if signals do not propagate faster than light (causality\footnote{{In the theory of relativity, the word ``causality'' stands for ``subluminal propagation of information'' \cite{Hawking1973,Wald,BemficaCausality2018,
Susskind1969,Fox1970}. This concept was introduced because the additional term ``$\, vx_B \,$'' in the relativistic transformation of time, $t_A=\gamma (t_B + v x_B)$,  can push a future event to the past (or vice versa), provided that $x_B^2 > t_B^2$. Hence, if information could propagate faster than light, there would be an observer for whom it is propagating towards the past. To avoid grandfather-like paradoxes, it was conjectured that superluminal communication, just like communication to the past, should be impossible, because the effect must always follow the cause (hence the term causality). Logically speaking, this reasoning is not very rigorous \cite{kessence}. But it worked! The principle of causality is built into the mathematical structure of the Standard Model of particle physics \cite{Peskin_book,Eberhard1988,Keister1996} and, to date, it has never been falsified.}}), and if the state of thermodynamic equilibrium (or the vacuum, for zero-temperature theories) is stable against (possibly large\footnote{Throughout the article, we use the generic word ``perturbation'' as a synonym of ``disturbance'', namely an alteration (i.e. displacement) of a region of the medium from its equilibrium state. According to this terminology, a perturbation is not necessarily small, unless we say it explicitly. In general, the thermodynamic equilibrium state should be stable against all kinds of perturbations, both small and large \cite{Hishcock1988}, although in most situations one is able to rigorously asses stability only in the linear regime.}) perturbations. Since decades, there is a whole line of research devoted to assessing these two properties 
\cite{Stuekelberg1962,
Hishcock1983,OlsonLifsh1990,GerochLindblom1990,
Geroch_Lindblom_1991_causal,Pu2010,
Kovtun2019,Lopez11,Bemfica2019_conformal1,
BrutoThird2021,GavassinoGibbs2021,
GavassinoCausality2021}. Unfortunately, the assessment procedure is complicated (especially for what concerns stability), and the proposed theories are much more numerous than those that are, then, effectively tested. It is clear that a universal and easily applicable criterion, that can be used to quickly asses if a theory is stable or not, would be a breakthrough for the field (which is exactly what this paper provides).   

One aspect of the assessment is particularly problematic. When we study the dynamics of small perturbations around the vacuum state, the linearised field equations are the same in all reference frames, because we are linearising a Lorentz-convariant theory around a Lorentz-invariant state. On the other hand, if the unperturbed state has finite temperature (or chemical potential), its total four-momentum defines a preferred reference frame, so that the linearised field equations look different in different frames. This opens the doors to a counter-intuitive fact: at finite temperature, the equilibrium state may be stable in one reference frame, but unstable in another one. This paradox is possible only because different observers impose their initial data on different constant-time hypersurfaces, and hence deal with different initial-value problems \cite{Kost2000,GavassinoLyapunov_2020,GavassinoUEIT2021}. The result is that one needs to test the stability of the equilibrium in all reference frames, to be sure that a theory really makes sense. This is unfortunate, because the stability analysis in a reference frame in which the system is moving can be very cumbersome (the background is anisotropic).

The goal of this paper is to finally resolve the paradox of systems that are stable in one reference frame and unstable in others. We will prove that this can happen \textit{only if} the principle of causality is violated. The intuition behind this fact is that two observers can disagree on whether a perturbation is growing or decaying only if (by relativity of simultaneity \cite{special_in_gen}) the perturbation can be chronologically reordered, so that the two observers disagree on which part of the perturbation is in the past, and which is in the future. Since this can happen only if the perturbation propagates outside the light-cone, it follows that you need to violate causality, if you want to have two observers disagreeing on a stability assessment. This simple idea, once formulated in mathematical terms, will result into two theorems, according to which causal theories that are stable in one reference frame are also stable in any other frame.

In the following, I will first describe the physical setup of the problem, and the general physical mechanisms at the origin of the instability of dissipative theories. I will then rigorously prove the main result in section \ref{causalstablerelazione}. A reader not interested in the technical details may, however, skip to section \ref{unishiipunti}, where I will provide a simple argument that summarizes the essence of the whole paper, or directly to section \ref{apliacia}, when I will present 14 examples of concrete applications of our results to theories that are commonly used in a number of fields.

In case some readers wish to have a brief summary of how the relativistic stability assessment usually works, they can see Appendix \ref{AAAAAAAAA} for a quick overview.
Particular emphasis is given to the differences with the non-relativistic case. The mathematical and logical foundations of the method were laid in \cite{Hiscock_Insatibility_first_order}.

Throughout the paper we adopt the signature $ ( - , +, + , + ) $ and work in natural units $c=1$. The space-time is Minkowski, with metric $g$; we use global inertial coordinates, generically denoted by $x^a$ (so that $\nabla_a=\partial_a$). Finally, all observers are inertial observers, i.e. they do not accelerate and they do not rotate.

\section{Some pertinent context}

The idea that there could be a connection between causality violations and instabilities has a long history, which may be summarised in the words of \citet{Israel_2009_inbook}: ``\textit{If the source of an effect can be delayed, it should be possible for a system to borrow energy from its ground state, and this implies instability}''. This argument is a restatement of the Hawking-Ellis vacuum conservation theorem \cite{Hawking1973}, according to which, if energy can enter an empty region faster than the speed of light, then the dominant energy condition is violated, and the energy density may become negative in some reference frame. Unfortunately, these ideas are not applicable to our case, because we are not studying the stability of the vacuum state, but that of a finite-temperature equilibrium state. More importantly, causality violations can occur even in systems that obey the dominant energy condition. For example, take a barotropic perfect fluid with equation of state\footnote{The reader should not be concerned about the fact that $dP/d\rho <0$ (thermodynamic inconsistency) for some $\rho$: in our proof of principle, we only need an acausal field theory, well-defined for any $\rho \geq 0 $, with smooth coefficients in the field equations, and such that $\rho >|P|$.}
\begin{equation}\label{prruz}
P(\rho) = \dfrac{\rho}{3} \big[1+ \sin(\rho^2) \big],
\end{equation}
where $P$ is the pressure and $\rho$ is the energy density, in some fixed units. This fluid is consistent with the dominant energy condition ($\rho>|P|$), but its equations are acausal, because the speed of sound $dP/d\rho$ is unbounded above.

Luckily, it is not so hard to modify the idea of Israel, adapting it to our case of interest: we only need to replace ``energy'' with ``entropy'' and ``ground state'' with ``equilibrium state'' \cite{GavassinoCausality2021}. Let us see in more detail how this works with a simple qualitative argument.

\subsection{Acausality + Dissipation = Instability?}\label{ilprimoluiluilui}

Imagine that a signal travels between two events $p$ and $q$, which are space-like separated, i.e. $g(p-q,p-q)>0$. By relativity of simultaneity \cite{special_in_gen}, we know that there are some reference frames in which $p$ happens before $q$, and other reference frames in which $q$ happens before $p$. Hence, in some reference frames the signal is travelling superluminally from $p$  to $q$, while in other frames it travels superluminally from $q$ to $p$.

Now, imagine to repeat this experiment, placing between $p$ and $q$ a dissipative medium, which absorbs the signal along the way. Then, the signal is emitted from, say, $p$. It travels in the direction of $q$, but it decays before reaching $q$. But in those reference frames in which $q$ happens before $p$, we observe that the signal is spontaneously generated in the middle of the medium, it grows without any external influence (nothing happens at $q$), and travels to $p$. Thus, the medium is unstable to the spontaneous generation of perturbations! One may argue that this type of perturbation is not really spontaneous, because still we need an emitter/receiver at $p$ for it to occur. However, the argument still works if we send $p$ at space-like infinity, so that we are left with a medium that absorbs/emits a space-like beam, which travels from/to infinity.  

The idea of the argument above is the same as that of \citet{Israel_2009_inbook}: if the cause of a signal (i.e. $p$) can be delayed, then the system can spontaneously generate a perturbation, borrowing entropy from the equilibrium state, and reversing the dissipative processes that should, instead, damp the perturbation. This implies instability.

Besides this qualitative argument, what are the concrete indications that causality and instabilities may be related? Let us have a brief summary of the present understanding of the causality-stability problem.

\subsection{Breakdown of causality and stability in infrared theories}

\begin{figure}
\begin{center}
\includegraphics[width=0.5\textwidth]{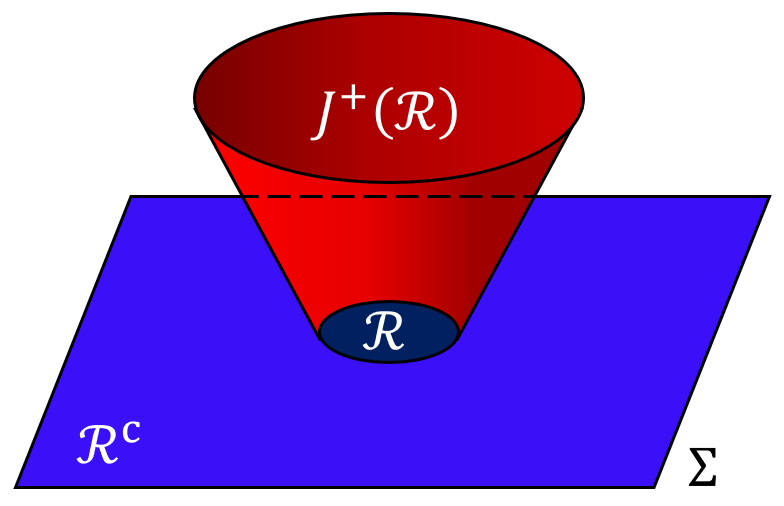}
	\caption{ Geometric visualization of the principle of causality. Take an arbitrary spacelike Cauchy 3D-surface $\Sigma$. The simplest example of such a surface is the hyperplane $\{ t=0 \}$. Divide $\Sigma$ into two regions: $\mathcal{R}$ (dark blue) and $\mathcal{R}^c$ (light blue). ``Paint in red'' all the timelike and lightlike curves that originate from $\mathcal{R}$ and propagate towards the future. The red paint will cover a set $J^+(\mathcal{R})$, called ``domain of influence of $\mathcal{R}$'', or ``causal future of $\mathcal{R}$'', or ``future light-cone of $\mathcal{R}$''.  Causality demands the following: if we compare two arbitrary solutions (of the field equations) whose initial data differ on $\mathcal{R}$, but coincide on $\mathcal{R}^c$, such solutions can differ only inside $J^+(\mathcal{R})$. This is equivalent to saying that information coming from $\mathcal{R}$ can \textit{never} exit $J^+(\mathcal{R})$. }
	\label{fig:JDR}
	\end{center}
\end{figure}

For deterministic field theories, the principle of causality reduces to a mathematical condition on the field equations: a variation of the initial data in a region of space $\mathcal{R}$ cannot affect the solution outside the future light-cone of $\mathcal{R}$ \cite{Hawking1973,Wald,BemficaCausality2018}, see figure \ref{fig:JDR}. If the equations are linear, causality also means that the retarded Green's function has support within the future light-cone \cite{Susskind1969,Fox1970}. It turns out that many phenomenological equations in physics are not consistent with this
causality criterion and, therefore, allow for super-luminal propagation of signals. The best known example is the diffusion equation $\partial_t T =D  \partial_x^2 T$, whose Green's function is
\begin{equation}\label{greenfunctionheat}
\mathcal{G}(t,x)= \dfrac{1}{\sqrt{4\pi Dt}} \exp\bigg( -\dfrac{x^2}{4Dt} \bigg) \, ,
\end{equation}
whose tails extend far beyond the future light-cone, propagating energy and information at infinite speeds. Such unphysical violations of the principle of causality usually occur in theories that are the low-frequency limit of some ``more complete'' causal theories \cite{Weymann1967,LindblomRelaxation1996,GavassinoUEIT2021}. This is, indeed, the case of the diffusion equation, that is (at least in ideal gases \cite{Israel_Stewart_1979,Denicol2012Boltzmann,Denicol_Relaxation_2011}) the low frequency limit of the telegraph equation \cite{cattaneo1958,Jou_Extended,rezzolla_book}, which is known to be causal. The same is true for the Schr{\"o}dinger equation, which is the acausal low frequency limit of the Klein-Gordon equation (with the field redefinition $\phi=e^{-imt}\psi$ \cite{Zee2003}).
\begin{equation}\label{super!}
\begin{matrix*}[l]
\text{Telegraph:} \\
\\
\text{Klein-Gordon:} \\
\end{matrix*}
\quad 
\begin{matrix*}[l]
(\tau \partial_t^2  +\partial_t) T = D \partial_x^2 T  \\
\\
(\partial_t^2  +m^2)\phi =\partial_x^2 \phi \\
\end{matrix*}
 \spc \xrightarrow{\partial_t \rightarrow 0} \spc 
 \begin{matrix*}[l]
\text{Diffusion:} \\
\\
\text{Schr{\"o}dinger:} \\
\end{matrix*}
\quad
\begin{matrix*}[l]
\partial_t T =D \partial_x^2 T \\
\\
i\partial_t \psi = -   \partial_x^2 \psi/2m  \\
\end{matrix*}
\end{equation}
For the reason above, causality violations usually occur only on very short time-scales, where the predictions of the acausal equation differ from those of its causal progenitor. In other words, causality violations usually happen outside the regime of validity of the ``infrared approximation'', upon which the acausal equation is built. Hence, one may argue that, as long as we manage to keep the high-frequency part of the solutions small, the predictions of the acausal equation should be reliable, and the causality violations negligible \cite{Fichera1992,Day1997,PanB2008,Auriault2017}.

Unfortunately, in a relativistically covariant context, keeping the acausal high-frequency part of the solutions small is almost impossible (at least in some reference frames), if the equation is acausal \textit{and} dissipative. The first authors who noticed this issue were \citet{Hiscock_Insatibility_first_order}, who verified that any Fick-type diffusion law becomes unstable in some reference frame, due to the fast growth of some unphysical high-frequency modes (see appendix \ref{AAAAAAAAA} for a quick overview of their methodology). A similar mechanism has been observed in several other systems of equations  \cite{Hishcock1983,OlsonLifsh1990,GerochLindblom1990,
Geroch_Lindblom_1991_causal,Pu2010,Kovtun2019}: 
if causality is violated, and the system is dissipative, there is some reference frame in which the system becomes unstable, due to the appearance of fast-growing modes. 


The fact that these instabilities usually depend on the frame of reference (i.e. the growing modes exist in some reference frames, but not in others) is deeply counterintuitive. Hence, it seemed natural to regard the unphysical growing modes as a mere ``mathematical pathology'' of the equations. Indeed, acausal field equations often do not present a good Cauchy problem for arbitrary data on space-like 3D-surfaces \cite{Susskind1969}; hence, it is not surprising that there is some reference frame in which an acausal theory ``misbehaves'' \cite{BaierRom2008}. However, this does not explain why dissipative systems are so exceptionally problematic: while non-dissipative acausal theories (like that considered by \citet{Susskind1969}) are singular only when the initial data is imposed on a characteristic surface, dissipative acausal systems are usually unstable in a continuum of reference frames \cite{Hiscock_Insatibility_first_order}. Hence, one may wonder whether acausality and dissipation are fundamentally incompatible. This is what we aim to understand here.

\section{Causality-Stability relations}\label{causalstablerelazione}

We have finally reached the central part of the paper. This section is arranged into three subsections, each one of which is a separate, stand-alone, result. In particular:
\begin{enumerate}
\item In subsection \ref{thetought} we present a more rigorous version of the argument given in subsection \ref{ilprimoluiluilui}, according to which, if a system is acausal and dissipative, then there is a reference frame in which it is unstable. Although linearity of the equations is never invoked explicitly, this argument is expected to be particularly useful for linear stability analyses (we also provide a concrete example in the Supplementary Material).
\item In subsection \ref{sez3} we present the following theorem: if a localised deviation from equilibrium decays over time uniformly in one reference frame, and its support does not exit the light cone, then it decays over time in all reference frames. This theorem is valid for both linear and non-linear field equations.
\item In subsection \ref{SiNNuoz} we present another theorem: if (in the linear regime) a causal theory predicts the existence of a growing sinusoidal plane-wave solution in one reference frame, then this theory is linearly unstable in all reference frames.
\end{enumerate} 
Combined together, these results should lead us to a simple stability criterion: \textit{a dissipative theory which is stable in one reference frame is causal if and only if it is stable in all reference frames}. Note that this ``causality-stability relation'' is strongly corroborated by all the explicit stability analyses that have been performed till now (which the author is aware of) for many different theories, including the Israel-Stewart theory (both in the Eckart \cite{Hishcock1983} and in the Landau \cite{OlsonLifsh1990} flow-frame), divergence-type theories \cite{GerochLindblom1990}, Geroch-Lindblom theories \cite{Geroch_Lindblom_1991_causal}, inviscid theories for heat conduction \cite{OlsonRegularCarter1990}, first-order viscous hydrodynamics \cite{Kovtun2019,BemficaDNDefinitivo2020}, second-order viscous hydrodynamics \cite{Pu2010}, third-order viscous hydrodynamics \cite{BrutoThird2021}, and Carter's multifluid theory \cite{GavassinoStabilityCarter2022}.

Since the three arguments presented in this section are stand-alone, in each subsection we will work under slightly different assumptions (e.g. in subsection \ref{sez3} we deal with non-linear deviations from equilibrium with compact support, whereas in subsection \ref{SiNNuoz} we study a linear plane wave with infinite support). However, there are three fundamental ideas that remain the same across the whole paper:
\begin{itemize}
\item ``Causality''$\, = \,$information cannot exit the light-cone \cite{Hawking1973,Wald,BemficaCausality2018,Susskind1969,Fox1970};
\item ``Instability''$\, = \,$there is a reference frame in which deviations from equilibrium can grow in time \cite{Hiscock_Insatibility_first_order};
\item ``Dissipation''$\, = \,$there is a reference frame in which deviations from equilibrium decay in time.
\end{itemize}
Eventually, this will allow us to construct a simple ``unified argument'' (in section \ref{unishiipunti}), which combines together the three main results of this section.

\subsection{Acausal dissipative systems are not covariantly stable}\label{thetought}

We consider a small perturbation that is travelling super-luminally across a medium, disturbing the equilibrium state and violating causality. We assume that such perturbation can be modelled as a localised wave-packet (like a sound pulse), which moves along a space-like world-line. If the wave-packet is highly-oscillating (ultra-violet limit), such world-line is a characteristic of the field equations. Let us also assume that there is an observer $A$ (say, Alice), in whose reference frame the system exhibits a dissipative behaviour. Since the unperturbed state is the equilibrium state, a reasonable definition of ``dissipative behaviour'' is that all localized perturbations eventually decay to zero for large times. Hence, we can require that, in the reference frame of Alice, the intensity of the perturbation is a decreasing function of time. The Minkowski diagram of this process is presented in figure \ref{fig:fig} (left panel). 

Now we immediately see the problem: since the perturbation is travelling along a space-like path, which part of this path happens ``earlier'' and which happens ``later'' depends on the frame of reference. Hence, we can surely find a second observer $B$ (say, Bob), in motion with respect to Alice, in whose reference frame the perturbation is growing in time (figure \ref{fig:fig}, right panel). Let us show it analytically.

\begin{figure}
\begin{center}
\includegraphics[width=0.7\textwidth]{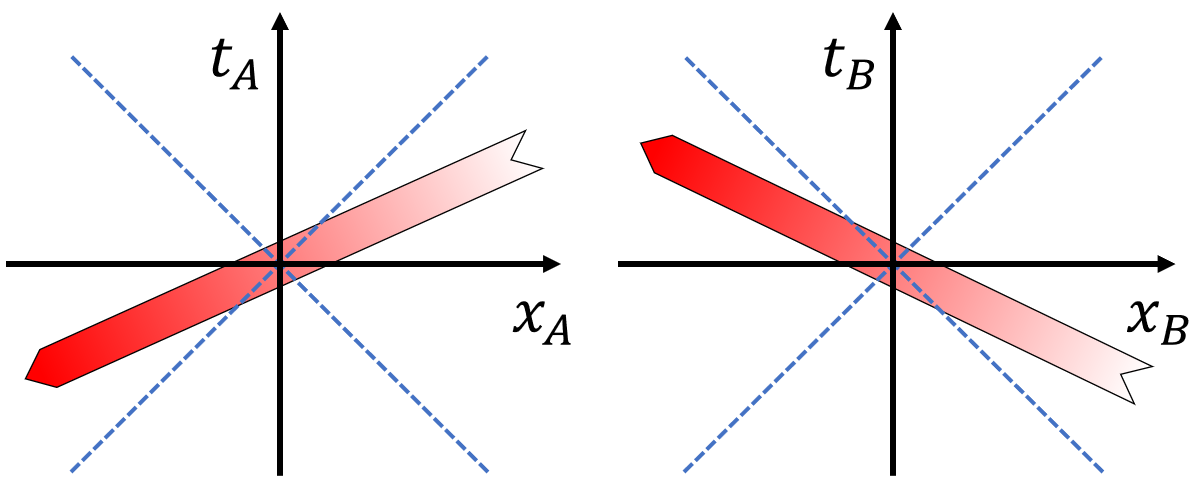}
	\caption{Minkowski diagrams of the argument outlined in section \ref{thetought}. Reference frame of Alice (left panel): the perturbation moves super-luminally from the left to the right and its intensity decreases with time as a result of dissipation. Reference frame of Bob (right panel): the perturbation moves from the right to the left and its intensity grows with time. The two points of view are connected by a Lorentz boost. The shades of red are a color-map of the intensity of the perturbation (red large, white small); the arrows have the orientation induced by $\varphi$ (see the main text); the blue dashed lines are the light-cone.}
	\label{fig:fig}
	\end{center}
\end{figure}

At each point $p$ of the space-like world-line drawn by the center of the wave-packet, we may quantify the intensity of the perturbation using a Lorentz scalar $\varphi(p)$.\footnote{For example, if $T^{ab}$ is the stress-energy tensor, $\varphi(p)$ may be the typical deviation from equilibrium (averaged over the local oscillations) of the scalar field $T^{ab}T_{ab}$, in a neighbourhood of $p$. One may also take its square, to make sure that $\varphi$ is always non-negative and plays the role of a sort of ``norm'' of the perturbation.} The inverse of the relation $\varphi(p)$ defines a Lorentz-invariant parametrization on the world-line: $p(\varphi)$. Using this parametrization, and approximating the world-line to a straight line passing through the origin, we can write a relation of the form $x_A(\varphi) = w \, t_A(\varphi)$, with $w>1$ (space-like condition). If we boost this relation to Bob's frame, we obtain
\begin{equation}\label{varfuz}
t_B(\varphi) =\gamma \, (1-vw) \, t_A(\varphi) \, ,
\end{equation}
where $v$ and $\gamma$ are the boost's velocity and Lorentz factor. Taking the derivative of \eqref{varfuz}, and inverting the result, we find
\begin{equation}\label{rubbuz}
\dfrac{d\varphi}{dt_B} = \dfrac{1}{\gamma(1-vw)} \, \dfrac{d\varphi}{d t_A} \, .
\end{equation} 
We see that, if $w^{-1}<v<1$, then the sign of $d\varphi/dt_B$ is opposite to that of $d\varphi/dt_A$. Thus, if the perturbation is damped in the reference frame of Alice ($d\varphi/dt_A <0$), it grows in the reference frame of Bob ($d\varphi/dt_B >0$), meaning that the equilibrium state is unstable in Bob's frame.

We can draw several conclusions from the argument above. First of all, we see that the instability can occur only if the system is both acausal and dissipative. In fact, if it were causal, then $w \leq 1$, and the factor $1-vw$ would always be positive; if it were non-dissipative, then $\varphi=\text{const}$, and equation \eqref{rubbuz} would reduce to the identity $0=0$. It is also immediately explained why the reference frames in which the system is unstable form a continuum: they are all those reference frames in which the chronological order of the events inside the perturbation is inverted, with respect to the chronological order perceived by Alice. Finally, by looking at equation \eqref{rubbuz}, we see that the instability is most violent close to $v=w^{-1}$, namely at the unstable-to-stable transition frame, where one has $d\varphi /dt_B = \infty$. This is a well-known feature of this kind of instabilities: rather than the growth rate, it is the growth time (the inverse of the rate!) that changes sign smoothly as we move from an unstable to a stable frame of reference \cite{Hiscock_Insatibility_first_order,GavassinoLyapunov_2020,GavassinoUEIT2021}.
In the Supplementary Material, we apply this argument to the super-luminal telegraph equation, showing that one can correctly predict the onset and the quantitative aspects of the instability without performing the whole stability analysis explicitly.

We can also make some additional comments:
\begin{itemize}
\item When $v>w^{-1}$, the perturbation grows with time in Bob's frame ($d\varphi/dt_B>0$); hence, we may say that the system looks ``anti-dissipative'' in Bob's frame. On the other hand, the obedience to the second law of thermodynamics ($\nabla_a s^a \geq 0$, where $s^a$ is the entropy current) is a Lorentz-invariant property of the system. This implies that the entropy grows also in the reference frame of Bob ($dS_B/dt_B \geq 0$). It follows that, in Bob's frame, the entropy is an increasing function of the intensity of the perturbation:
\begin{equation}
\dfrac{dS_B}{d\varphi} = \dfrac{dS_B}{dt_B} \, \dfrac{dt_B}{d\varphi} \geq 0 \, .
\end{equation}
In other words, the equilibrium state is not the maximum entropy state in Bob's frame\footnote{The frame-dependence of the maximum entropy state is not in contradiction with the Lorentz-invariance of the entropy \cite{GavassinoLorentzInvariance2021}, because the total entropy is Lorentz-invariant only at equilibrium \cite{Becattini2016}. Indeed, it is easy to see from figure \ref{fig:fig} (considering that $\nabla_a s^a \neq 0$ along the red arrows) that any attempt to use the Gauss theorem to prove that $S_A =S_B$ is doomed to fail.}. The recently discovered connection between instabilities and violations of the maximum entropy principle \cite{GavassinoLyapunov_2020,GavassinoGibbs2021,GavassinoCausality2021}
can be understood in the light of this simple argument.
\item It is evident from figure \ref{fig:fig} that, for the argument to be rigorous, the whole shape of the perturbation, and not just its peak, must be drifting super-luminally. Hence, our argument cannot be extended to causal systems whose group velocity happens to be super-luminal for some specific frequency (like those studied in \cite{Guoy1996,WangSuper2000,Withayachumnankul2010}, which can be stable \cite{Pu2010}). Only genuinely acausal systems \cite{Susskind1969} are affected by the present instability mechanism.
\item Since the high-frequency wave-packets travel on the ``acoustic cone'' (a.k.a. characteristic cone) of the field equations \cite{kessence}, we can conclude that the instability appears whenever the hyperplane $ \{ t_B=\text{const} \}$ is more sloping than the acoustic cone, so that a part of the future acoustic cone deeps below the hyperplane. Therefore, if the material is isotropic in the reference frame of Alice, the acausal dissipative theory is unstable in Bob's frame if the hyperplane $\{ t_B = \text{const} \}$ is ``time-like'' with respect to the acoustic metric
\begin{equation}
\tilde{g}^{ab} = g^{ab} + (1-w^{2})u_A^a u_A^b \spc (u_A^a = \text{Alice's four-velocity}).
\end{equation}
We will explore this point in more detail in section \ref{unishiipunti}.
\item The instability mechanism described here differs profoundly from the condensation instability of the tachyon field. In fact, the tachyon field is a causal system \cite{Susskind1969}, which is unstable in all reference frames, whereas here we are dealing with acausal systems, which are stable in some reference frames and unstable in others.
\end{itemize}

\subsection{Lorentz-invariance of dissipation}\label{sez3}

We have seen that causality violations lead to instabilities. Now we will prove that frame-dependent instabilities (namely, deviations from equilibrium that grow in Bob's frame while they decay in Alice's frame) are forbidden, if the principle of causality is respected. In this section, we will focus our attention on a localised (possibly large) ``perturbation'', namely a compactly-supported deviation of the hydrodynamic fields from their equilibrium value.

Take an arbitrary space-like Cauchy 3D-surface $\Sigma$, and decompose it into two regions $\mathcal{R}$ and $\mathcal{R}^c$, such that
\begin{equation}
\mathcal{R} \cup \mathcal{R}^c = \Sigma   \quad \spc \mathcal{R} \cap \mathcal{R}^c = \emptyset \spc \quad \mathcal{R} \text{ is compact} .
\end{equation}
Using $\Sigma$ as the initial-data hypersurface, suppose that there is an initial (linear or non-linear) displacement from equilibrium, confined within $\mathcal{R}$. This is what we mean by ``localised perturbation''. Physically, such perturbation can be any kind of non-equilibrium phenomenon, like a hot spot, a soliton, a vortex ring, a chemical imbalance, or even an ``explosion'' (in $\mathcal{R}$). We construct a non-negative scalar field $\varphi$, which measures how far the system is from equilibrium at each spacetime event, and vanishes wherever the perturbation is absent (hence $\varphi =0$ on $\mathcal{R}^c$). If the theory is well-behaving, such ``perturbation-intensity field'' (namely, $\varphi$) can always be constructed, see Appendix \ref{appendoxB} (a rigorous mathematical definition of ``perturbation'' is provided in Appendix \ref{techno}). The following definition is natural \cite{Hawking1973,Wald,BemficaCausality2018}:
\begin{definition}[sub-luminality]\label{sub}
The perturbation is \textit{sub-luminal} if $\varphi(p) =0$ for any event $p \in \mathcal{D}^+(\mathcal{R}^c)$, the future Cauchy development of $\mathcal{R}^c$.  
\end{definition}
\noindent An equivalent definition of sub-luminality is that $\varphi \neq 0$ only on $J^+(\mathcal{R})$ (the causal future of $\mathcal{R}$), see figure \ref{fig:fig2}, left panel. Now, if $u_A^a$ is Alice's four-velocity, we can define Alice's time-coordinate in a Lorentz-covariant fashion:
\begin{equation}
t_A = -x_a u_A^a  \, .
\end{equation}
Hence, interpreting $t_A$ as a scalar field, we can define the sets
\begin{equation}
J^+_A (t):= \{ \, \text{events }p \, | \, t_A(p) \geq t \, \} \, .
\end{equation}
Each set $J^+_A (t)$ is simply the causal future of the hyperplane $t_A=t$. Then, we can make a second definition:
\begin{definition}[dissipation]\label{diss}
A sub-luminal perturbation is dissipated in the reference frame of Alice if, $\forall \, \varepsilon >0$, there exists $t_\varepsilon \in \mathbb{R}$ such that $\varphi(p)<\varepsilon$ for any event $p \in J^+(\mathcal{R}) \cap J^+_A (t_\varepsilon)$. 
\end{definition}
This is a condition of uniform convergence of the perturbation to zero: after a certain time $t_\varepsilon$ (in Alice's rest frame), the intensity of the perturbation falls below $\varepsilon$ everywhere, and stays below $\varepsilon$ for $t_A \geq t_\varepsilon$ (see shades of red in figure \ref{fig:fig2}, left panel). Think of $\varepsilon$ as the instrumental resolution: at $t_\varepsilon$, the system is back in equilibrium within resolution $\varepsilon$. Analogous definitions can be made for Bob: just replace $A$ with $B$. We can finally present our theorem:

\begin{figure}
\begin{center}
\includegraphics[width=0.5\textwidth]{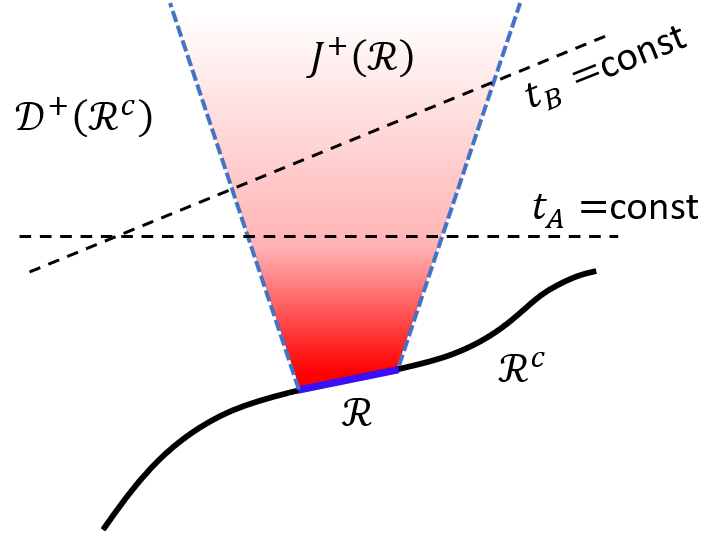}%
\includegraphics[width=0.48\textwidth]{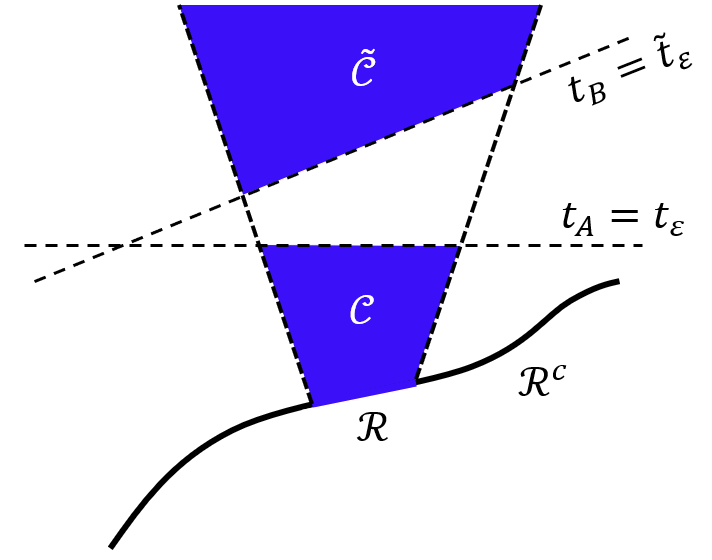}
	\caption{Left panel: Minkowski diagram of a sub-luminal perturbation (in Alice's coordinates). The blue segment is $\mathcal{R}$, where the perturbation is initially located, the black line is $\mathcal{R}^c$, where the perturbation is absent; together, $\mathcal{R}$ and $\mathcal{R}^c$ constitute the initial-data hypersurface $\Sigma$. The shaded red region is $J^+(\mathcal{R})$, where $\varphi$ can propagate. The white region above $\Sigma$ is $\mathcal{D}^+(\mathcal{R}^c)$, where $\varphi=0$. The shades of red are a color-map of $\varphi$ (red large, white small). The hyperplanes at constant $t_A$ and $t_B$ are respectively horizontal and oblique lines. Right panel: visualization of the sets $\mathcal{C}$ and $\tilde{\mathcal{C}}$ constructed in the proof of Theorem \ref{theo}.}
	\label{fig:fig2}
	\end{center}
\end{figure}

\begin{theorem}[Lorentz-invariance of dissipation]\label{theo}
If a sub-luminal perturbation is dissipated in the reference frame of Alice, it is also dissipated in the reference frame of Bob. 
\end{theorem}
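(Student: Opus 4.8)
The plan is to reduce the entire statement to a purely geometric containment lemma and then let Definition~\ref{diss} do the rest. Concretely, I would aim to prove the following: for every $c\in\mathbb{R}$ there exists $\tilde c\in\mathbb{R}$ such that
\begin{equation}
J^+(\mathcal{R}) \cap J^+_B(\tilde c) \subseteq J^+_A(c).
\end{equation}
Granting this, Theorem~\ref{theo} is immediate. Given $\varepsilon>0$, dissipation in Alice's frame supplies a time $t_\varepsilon$ with $\varphi<\varepsilon$ on $J^+(\mathcal{R})\cap J^+_A(t_\varepsilon)$; taking $\tilde t_\varepsilon$ to be the $\tilde c$ associated with $c=t_\varepsilon$, every event $p\in J^+(\mathcal{R})\cap J^+_B(\tilde t_\varepsilon)$ then lies in $J^+_A(t_\varepsilon)$ and hence satisfies $\varphi(p)<\varepsilon$, which is exactly dissipation in Bob's frame. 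Sub-luminality is what makes this reduction legitimate: since $\varphi$ is confined to $J^+(\mathcal{R})$, no events outside this set ever need to be controlled.

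To prove the lemma I would exploit the fact that, \emph{inside the light-cone}, Alice's and Bob's time functions are mutually comparable. Let $\mathscr{C}^+$ denote the set of future-directed causal vectors and set $\alpha(w)=-w_a u_A^a$ and $\beta(w)=-w_a u_B^a$; both are linear and, since $u_A^a$ and $u_B^a$ are future-directed and timelike, strictly positive on $\mathscr{C}^+\setminus\{0\}$. The slice $\{w\in\mathscr{C}^+ : \alpha(w)=1\}$ is a closed solid ball in Alice's simultaneity hyperplane, hence compact, so the continuous function $\beta$ attains there a finite maximum $M$ and a strictly positive minimum; by homogeneity this upgrades to a two-sided bound $m\,\alpha(w)\le\beta(w)\le M\,\alpha(w)$ for all $w\in\mathscr{C}^+$. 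Now take $p\in J^+(\mathcal{R})$ and write $p=q+w$ with $q\in\mathcal{R}$ and $w\in\mathscr{C}^+$. Linearity gives $t_A(p)=t_A(q)+\alpha(w)$ and $t_B(p)=t_B(q)+\beta(w)$, and compactness of $\mathcal{R}$ bounds $t_A,t_B$ on it. Combining $\alpha(w)\ge\beta(w)/M$ with $t_B(p)\ge\tilde c$ yields
\begin{equation}
t_A(p)\;\ge\;\min_{\mathcal{R}} t_A + \frac{1}{M}\Big(\tilde c-\max_{\mathcal{R}} t_B\Big),
\end{equation}
so choosing $\tilde c$ large enough forces $t_A(p)\ge c$, which proves the lemma.

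The representation $p=q+w$ with $w\in\mathscr{C}^+$ is valid precisely because, in Minkowski space, $p\in J^+(\mathcal{R})$ if and only if $p-q$ is future-directed causal (or zero) for some $q\in\mathcal{R}$; combined with compactness of $\mathcal{R}$ this is routine. I expect the genuine conceptual content — and the exact place where causality is indispensable — to be the two-sided comparability of $\alpha$ and $\beta$. This bound holds \emph{only} on $\mathscr{C}^+$: for a space-like displacement $w$ the quantities $\alpha(w)$ and $\beta(w)$ can have opposite signs (relativity of simultaneity), so no inequality of the form $\alpha\ge\beta/M$ can survive. This is exactly the loophole exploited in subsection~\ref{thetought}: if the perturbation leaked outside $J^+(\mathcal{R})$, one could produce events with $t_B$ arbitrarily large yet $t_A$ bounded above, and uniform Alice-decay would fail to control them in Bob's frame. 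Sub-luminality is what closes this loophole and lets the compactness argument go through.
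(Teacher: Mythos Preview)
Your proof is correct and reaches the same containment lemma as the paper, but by a different mechanism. The paper's argument is purely topological: it observes that the ``truncated cone'' $\mathcal{C}:=\overline{J^+(\mathcal{R})\cap\{t_A<t_\varepsilon\}}$ is compact (because $\mathcal{R}$ is), so the continuous function $t_B$ attains a finite maximum on it; taking $\tilde t_\varepsilon$ strictly above that maximum forces $J^+(\mathcal{R})\cap J^+_B(\tilde t_\varepsilon)$ to miss $\mathcal{C}$ and hence to sit inside $J^+(\mathcal{R})\cap J^+_A(t_\varepsilon)$. You instead establish a \emph{quantitative} linear comparison $\alpha(w)\ge\beta(w)/M$ on the full future causal cone (via compactness of the unit slice $\{\alpha=1\}\cap\mathscr{C}^+$), and invoke compactness of $\mathcal{R}$ only to bound the basepoint contributions $t_A(q),t_B(q)$. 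The payoff of your route is an explicit formula for $\tilde t_\varepsilon$ in terms of $t_\varepsilon$, with $M=\gamma(1+|v|)$ the maximal Doppler factor; the payoff of the paper's route is brevity, since it never needs to compute any constant. Your closing remark that the comparison $\alpha\ge\beta/M$ collapses precisely for spacelike $w$ is exactly the obstruction the paper singles out after its own proof, so the two discussions of where causality enters are in full agreement.
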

\begin{proof}
Let's assume that the sub-luminal perturbation is dissipated in Alice's frame. Then, taken an arbitrary $\varepsilon >0$, we can find a time $t_\varepsilon$, future to $\mathcal{R}$, such that $\varphi < \varepsilon$ in $J^+(\mathcal{R}) \cap J^+_A (t_\varepsilon)$. Let $\mathcal{C}$ be the closure of $J^+(\mathcal{R}) \cap [J^+_A(t_\varepsilon)]^c$. Since $\mathcal{R}$ is bounded, $\mathcal{C}$ is compact (see figure \ref{fig:fig2}, right panel). On the other hand, $t_B$ is a continuous function; hence, also the image set $t_B(\mathcal{C}) \subset \mathbb{R}$ is compact. This implies that, fixed an arbitrary $\eta>0$, the real number 
\begin{equation}
\tilde{t}_\varepsilon := \eta + \max[t_B(\mathcal{C})]
\end{equation}
exists and is finite. Defined $\tilde{\mathcal{C}}:=J^+(\mathcal{R}) \cap J^+_B(\tilde{t}_\varepsilon)$, we have that $\mathcal{C} \cap \tilde{\mathcal{C}} = \emptyset$, because
\begin{equation}
\min[t_B(\tilde{\mathcal{C}})]= \tilde{t}_\varepsilon = \eta + \max[t_B(\mathcal{C})] > \max[t_B(\mathcal{C})] \, .
\end{equation}
Considering that, by definition, $\tilde{\mathcal{C}} \subset J^+(\mathcal{R}) \subseteq \mathcal{C} \cup [J^+(\mathcal{R}) \cap J^+_A (t_\varepsilon)]$, if follows that
\begin{equation}
 \tilde{\mathcal{C}} \subseteq J^+(\mathcal{R}) \cap J^+_A (t_\varepsilon) \, .
\end{equation}
However, if $ \tilde{\mathcal{C}} = J^+(\mathcal{R}) \cap J^+_B(\tilde{t}_\varepsilon)$ is a subset of $J^+(\mathcal{R}) \cap J^+_A (t_\varepsilon)$, then $\varphi <\varepsilon$ in $J^+(\mathcal{R}) \cap J^+_B(\tilde{t}_\varepsilon)$. 
\end{proof}
The essence of the proof can be easily understood by looking at the color-map in figure \ref{fig:fig2} (left panel): if the horizontal line $t_A = \text{const}$ is far enough in the future, the field $\varphi$ becomes arbitrarily small in the shaded region above it; then, we can always find an oblique line $t_B=\text{const}$ which slices $J^+(\mathcal{R})$ \textit{above} the horizontal line, as in the figure; in this way, we are sure that $\varphi$ is small also in Bob's frame, for a given time $t_B$ (and for later times). 

Figure \ref{fig:fig2} (left panel) also shows why the condition of sub-luminality is needed: the lines $t_A=\text{const}$ and $t_B=\text{const}$ always intersect somewhere; hence, an infinite portion of the line $t_B=\text{const}$ lies in the \textit{past} of $t_A=\text{const}$, where there is no bound on $\varphi$. Therefore, if $\varphi \rightarrow +\infty$ in the down-left corner of the figure (which is possible only if causality is violated), there is no limit on how large $\varphi$ can get in Bob's frame. This is exactly what happens in the argument of section \ref{thetought}. On the other hand, causality demands that $\varphi=0$ outside $J^+(\mathcal{R})$, so that, by pushing up the oblique line, we can make sure that $t_B=\text{const}$ is in the future of $t_A=\text{const}$ within the support of $\varphi$.

\subsection{Lorentz-invariance of linear instability}\label{SiNNuoz}

Theorem \ref{theo} deals with  non-linear perturbations, which are initially localised in space. However, in the linear approximation, it is usually convenient to study the evolution of sinusoidal plane waves, which have infinite support. Is there a straightforward analogue of Theorem \ref{theo} for sinusoidal plane waves?

We work with linear perturbations to a homogeneous stationary state, and call $\varphi:=\{ \delta \psi_i \}$ the array of perturbation fields $\delta \psi_i$. We take a global solution (i.e. a solution that is well defined across all Minkowski space-time) of the form
\begin{equation}\label{plainez}
\varphi = \text{``periodic field''} \times  e^{\Gamma_B t_B} \spc (\Gamma_B \in \mathbb{R}) \, ,
\end{equation}
where the periodic part is periodic both in space and in time. On hyperplanes $\{ t_B =\text{const} \}$, we have $\varphi=\text{``periodic field''} $, which implies that the perturbation may be a plane wave (i.e. a Fourier mode) in Bob's frame. This is the type of solution that one considers while performing a linear stability analysis in Bob's frame \cite{Hiscock_Insatibility_first_order,Kost2000}. Depending on the sign of $\Gamma_B$, the perturbation grows (if $\Gamma_B >0$), decays (if $\Gamma_B <0$), or has constant intensity (if $\Gamma_B=0$), in Bob's frame. Working in Alice's frame, $\varphi$ is no longer a Fourier mode (unless $\Gamma_B=0$, see Appendix \ref{appendixB1}), but it takes the form
\begin{equation}\label{varfuzzo}
\varphi = \text{``periodic field''} \times e^{\Gamma_B \gamma (t_A-vx_A)} \, .
\end{equation}
We can orient the $x_A$-axis in a way that $v>0$. Now, let us make two assumptions:
\begin{itemize}
\item The field equations are causal \cite{Hawking1973,Wald,BemficaCausality2018};
\item The perturbation grows in Bob's frame: $\Gamma_B >0$.
\end{itemize}
Our goal is to prove that the system is linearly unstable also in Alice's frame.

\begin{figure}
\begin{center}
\includegraphics[width=0.7\textwidth]{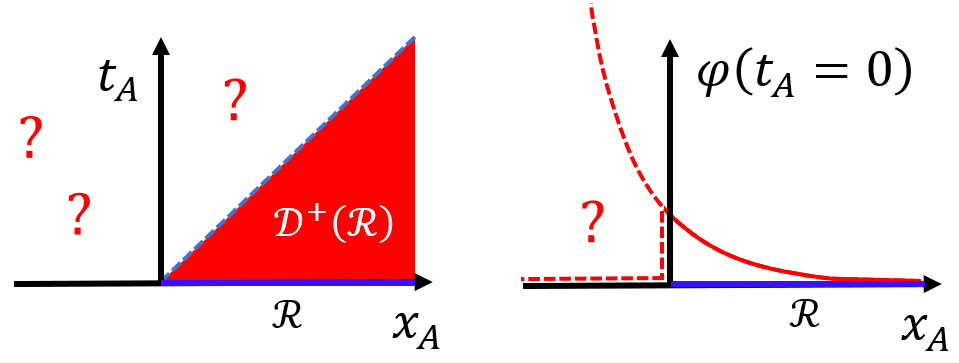}
	\caption{Left panel: an observer inside the red region $\mathcal{D}^+(\mathcal{R})$ cannot know what the initial state of the system was for $x_A<0$ (at $t_A=0$). Right panel: therefore, both $\varphi(t_A=0)$ and $\varphi^\star(t_A=0)=\Theta(x_A)\, \varphi(t_A=0)$ are initial states which are consistent with the data available to such observer; no experiment performed inside $\mathcal{D}^+(\mathcal{R})$ can tell $\varphi$ and $\varphi^\star$ apart. }
	\label{fig:figIgnor}
	\end{center}
\end{figure}

Consider an event $p \in \mathcal{D}^+(\mathcal{R})=\{t_A \geq 0 \} \cap \{x_A > t_A \}$, where $\mathcal{R}$ is the half-hyperplane (see figure \ref{fig:figIgnor})
\begin{equation}
\mathcal{R}:= \{ t_A=0 \} \cap \{ x_A >0 \} \, .
\end{equation}
By causality, $\varphi(p)$ cannot depend on the initial state of the system outside $\mathcal{R}$. In particular, if we consider an alternative solution $\varphi^\star$, whose initial data (for $t_A=0$) agrees with $\varphi$ on $\mathcal{R}$ and vanishes outside $\mathcal{R}$, i.e.
\begin{equation}
\varphi^\star(t_A=0)= \Theta(x_A) \, \varphi(t_A=0) \spc (\Theta = \text{Heaviside step function}) \, ,
\end{equation}
then we must have $\varphi^\star=\varphi$ on $\mathcal{D}^+(\mathcal{R})$. It follows that (for any $\varepsilon >0$, $t_A \geq 0$)
\begin{equation}
\varphi^\star\big|_{x_A=t_A+\varepsilon} \, = \, \varphi\big|_{x_A=t_A+\varepsilon} \, \propto \, e^{\Gamma_B \gamma (1-v)t_A}  \, \xrightarrow{t_A \rightarrow +\infty} \infty \, ,
\end{equation}
which means that both $\varphi$ and $\varphi^\star$ have divergent amplitude at future light-like infinity (see figure \ref{fig:fig3}). 
\begin{figure}
\begin{center}
\includegraphics[width=0.9\textwidth]{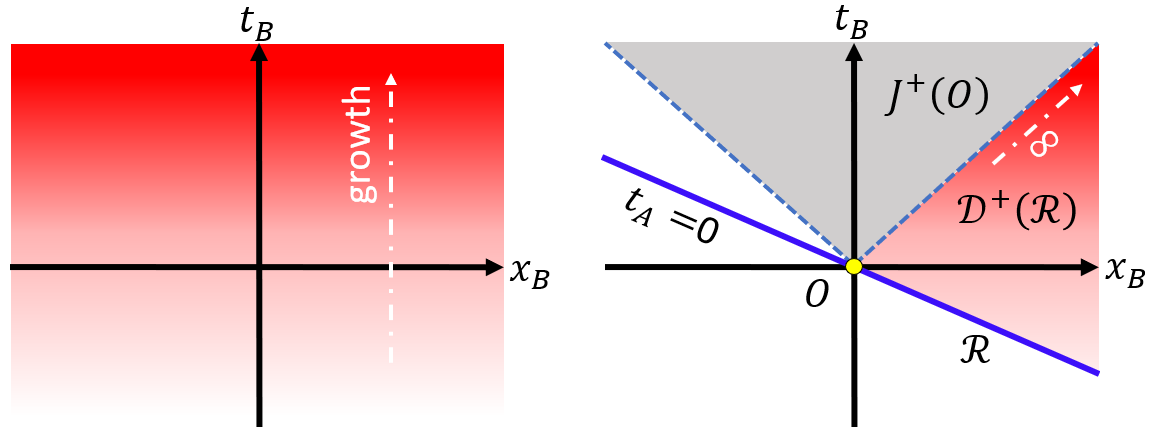}
	\caption{Minkowski diagram of the two solutions, $\varphi$ (left panel) and $\varphi^\star$ (right panel), in Bob's coordinates. The shades of red are a colormap of the the perturbation (the oscillatory behaviour of the periodic part is averaged out). On the grey area, we do not know the actual intensity of $\varphi^\star$. Left panel: $\varphi$ is an unstable Fourier mode (i.e. a growing plane wave) in B frame; it is well-defined across the whole space-time; its oscillation amplitude is constant along hyperplanes $ t_B = \text{const} $ (horizontal lines), and grows exponentially for growing $t_B$ ($\varphi \propto e^{\Gamma_B t_B}$). Right panel: $\varphi^\star$ is constructed on the half space-time $ \{ t_A \geq 0 \}$, by ``gluing'' initial data at $t_A=0$. On the right (on $\mathcal{R}$), we take $\varphi^\star(t_A=0)=\varphi(t_A=0)$, so that (by causality) $\varphi^\star=\varphi$ on $\mathcal{D}^+(\mathcal{R})$. On the left, we set $\varphi^\star(t_A=0)=0$ (hence $\varphi^\star=0$ on the respective Cauchy development). In this way, $\varphi^\star$ has a well-defined Fourier transform on $\{ t_A=0 \}$, but it diverges on $\mathcal{D}^+(\mathcal{R})$ (in the right-up corner), signalling an instability in Alice's frame. }
	\label{fig:fig3}
	\end{center}
\end{figure}
Now, it is not so surprising that $\varphi$ diverges somewhere in the future: in Alice's reference frame one has $\varphi(t_A=0) \propto \exp(-\Gamma_B \gamma v x_A)$, which is divergent at $x_A=-\infty$. Indeed, it is well-known that, if a perturbation has a divergent tail at $t_A=0$, its later exponential growth cannot be taken as an indication of instability of the field equations\footnote{For example, a perturbation of the form $\varphi=e^{t-x}$ is an exponentially growing solution ($\varphi \propto e^t$) of the causal wave equation $\nabla_a \nabla^a \varphi=0$ (that is obviously stable), with initial profile $\varphi(t=0)=e^{-x}$, which exhibits a divergent tail at $x=-\infty$.}. On the other hand, $\varphi^\star$ has a much more ``innocent'' initial state\footnote{For example, $\varphi^\star(t_A=0)$ has a well-defined Fourier transform. The reader should not be concerned about the discontinuity at $x_A=0$, because the step function can be replaced by any smooth function $\tilde{\Theta}$ such that $\tilde{\Theta}(x_A)=\Theta(x_A)$ for $x_A \in(-\infty, -1) \cup (0,+\infty)$, without affecting the result.}:
\begin{equation}
\varphi^\star(t_A=0) = \text{``periodic field''} \times  \Theta(x_A) \, e^{ -\Gamma_B \gamma v x_A} \, .
\end{equation} 
It is evident that, if such a perturbation diverges for later times, the system must be unstable in Alice's frame. We have, therefore, proven the following theorem:
\begin{theorem}[Lorentz-invariance of instability]\label{theo2}
If a causal (linear) theory presents a growing Fourier mode in one reference frame, then it is linearly unstable in all reference frames. 
\end{theorem}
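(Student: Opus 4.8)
The plan is to exploit causality as a tool for surgically removing a divergent initial tail, thereby converting an ambiguous ``growth'' into a genuine instability. I would begin with the postulated global solution, which in Bob's frame is the growing Fourier mode $\varphi = \text{``periodic field''} \times e^{\Gamma_B t_B}$ with $\Gamma_B>0$, and rewrite it in Alice's coordinates using $t_B=\gamma(t_A-vx_A)$, so that $\varphi = \text{``periodic field''} \times e^{\Gamma_B \gamma (t_A-vx_A)}$. The first thing to confront is why this does not already settle the matter: the initial profile at $t_A=0$ behaves like $e^{-\Gamma_B\gamma v x_A}$, which blows up as $x_A\to-\infty$, and a perturbation with a divergent initial tail can grow in time for entirely benign reasons (the cited example $\varphi=e^{t-x}$ for the stable wave equation makes this explicit). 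So the naive mode is not an admissible witness of instability in Alice's frame.

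The heart of the argument is to excise that tail without disturbing the growth. I would fix the half-hyperplane $\mathcal{R}=\{t_A=0\}\cap\{x_A>0\}$ and introduce a surrogate solution $\varphi^\star$ whose data on $\{t_A=0\}$ coincide with $\varphi$ on $\mathcal{R}$ but vanish for $x_A<0$, namely $\varphi^\star(t_A=0)=\Theta(x_A)\,\varphi(t_A=0)$. This is the single point where the causality hypothesis is used: because $\varphi$ and $\varphi^\star$ share identical data on $\mathcal{R}$, they must agree throughout the future Cauchy development $\mathcal{D}^+(\mathcal{R})=\{t_A\geq0\}\cap\{x_A>t_A\}$. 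Consequently $\varphi^\star$ inherits the exponential growth of $\varphi$ inside that wedge; evaluating on any ray $x_A=t_A+\varepsilon$, which lies in $\mathcal{D}^+(\mathcal{R})$, gives $\varphi^\star\propto e^{\Gamma_B\gamma(1-v)t_A}\to\infty$ as $t_A\to+\infty$, since $0<v<1$.

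Finally I would verify that $\varphi^\star$, unlike $\varphi$, is a legitimate witness: its initial profile $\Theta(x_A)\,e^{-\Gamma_B\gamma v x_A}$ is bounded and normalizable (it has a well-defined Fourier transform), with no divergent tail. A bounded initial perturbation that grows without bound in the future is exactly what we mean by a linear instability in Alice's frame; and since Alice is an arbitrary inertial observer, this yields instability in every frame, completing the proof. I do not expect a deep obstacle here --- the construction is clean once one sees it --- so the main work is technical: smoothing the discontinuity of $\Theta$ at $x_A=0$ (which, as noted, can be done without altering the data on $x_A>0$ or on $x_A<-1$, hence without affecting the wedge where the growth occurs), invoking well-posedness of the causal Cauchy problem to guarantee that $\varphi^\star$ exists, and pinning down the definition of ``instability'' so that unbounded growth from admissible data rigorously qualifies. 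The genuinely load-bearing idea, by contrast, is purely geometric: causality lets us replace the pathological left tail by zero while the future light-wedge $\mathcal{D}^+(\mathcal{R})$ remembers nothing of that replacement.
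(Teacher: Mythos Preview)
Your proposal is correct and follows essentially the same approach as the paper: the same surrogate $\varphi^\star$ obtained by truncating the initial data with $\Theta(x_A)$, the same invocation of causality to identify $\varphi^\star$ with $\varphi$ on $\mathcal{D}^+(\mathcal{R})=\{t_A\geq 0\}\cap\{x_A>t_A\}$, and the same evaluation along $x_A=t_A+\varepsilon$ to exhibit unbounded growth from admissible data. Your commentary on the load-bearing role of causality and on smoothing the step function mirrors the paper's own remarks and footnotes.
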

Equivalently, if a causal theory is stable in one reference frame, there cannot be any growing Fourier mode in the boosted frames (analogue of Theorem 1 for plane waves). This results generalizes Theorem III of \citet{BemficaDNDefinitivo2020} to linear systems with arbitrary linear field equations. Theorem \ref{theo2} is also a generalization of the ``inverse argument'' of \citet{GavassinoCausality2021} to theories that do not have an entropy current with strictly non-negative divergence, such as DNMR \cite{Denicol2012Boltzmann} and BDNK \cite{Bemfica2019_conformal1}. Note that, for Theorem \ref{theo2} to hold, the unperturbed state does not need to be the state of global thermodynamic equilibrium; instead, it may just be a homogeneous and stationary background state. 

Let us, finally, give a less rigorous, but more intuitive, explanation of Theorem \ref{theo2}. Assume that, working in Alice's frame, we can split a given solution of the field equations into the product
\begin{equation}
\varphi = (\text{Intrinsic growth}) \times (\text{Drift}) = e^{\Gamma_A t_A} \times \varphi_D(x_A-w \, t_A) \, .
\end{equation}
Stability means $\Gamma_A <0$, causality requires $|w| \leq 1$. If we assume that $\varphi_D(x_A)=\text{``periodic field''} \times \exp(-\alpha x_A)$, with $\alpha>0$, we obtain
\begin{equation}\label{gagugo}
\varphi \propto e^{-\alpha x_A + (\Gamma_A+\alpha w)t_A} \, .
\end{equation}
Consistently with what we said before, we see that the fact that the perturbation grows in Alice's frame ($\Gamma_A +\alpha w>0$) does not necessarily mean that the theory is unstable ($\Gamma_A>0$), because a perturbation with an infinite tail (namely $\varphi =\infty$ at $x_A=-\infty$) can mimic an effective growth by drifting its tail. However, since $|w| \leq 1$, such effective growth cannot be too large in causal theories. Indeed, if we rewrite the perturbation \eqref{varfuzzo} in the form \eqref{gagugo}, we find that
\begin{equation}
\Gamma_A = \Gamma_B \gamma (1-vw) >0 \quad (\text{by causality}) \, ,
\end{equation}
signalling instability in Alice's frame. The reader can see the Appendix of \citet{GavassinoLyapunov_2020} for a similar argument.

\section{Acoustic-cone argument}\label{unishiipunti}

There is one ``global argument'', which unifies elegantly all the previous results, and gives a clear physical intuition of the underlying mechanism relating acausality and instability.

We start from a well-known fact: the outer characteristics  that pass through a space-time point $p$ bound the domain of influence of $p$ \cite{Susskind1969,Kost2000}. This implies that, if we perturb a system at $p$ (e.g., by coupling the field equations with an external source), the induced disturbance will be confined within a conical-like region called (future) acoustic cone\footnote{By ``acoustic cone'' we actually mean the outermost cone: the fastest characteristic. Here, we are using the evocative word ``acoustic'' to mean that observers inside it can ``feel'' the disturbance. But such disturbance does not need to be sound in a strict sense. It may also be a shear or an Alfv\'{e}n wave. Also, note that the acoustic cone is an actual 3D-cone (like the light-cone) only if the field equations are hyperbolic, and the medium is isotropic in some frame. For anisotropic media, the shape of the acoustic cone may be distorted. Furthermore, if the field equations are parabolic, the acoustic cone degenerates to a 3D-hyperplane. But the argument still applies.} \cite{kessence,DisconziAcoustic2019}. In addition, if the unperturbed state is a state of global thermodynamic equilibrium, and if the theory is dissipative, we can assume that the perturbation will be more intense at the tip of the cone (i.e. closer to $p$), and it will become smaller as we move far away from $p$.

\begin{figure}
\begin{center}
\includegraphics[width=0.6\textwidth]{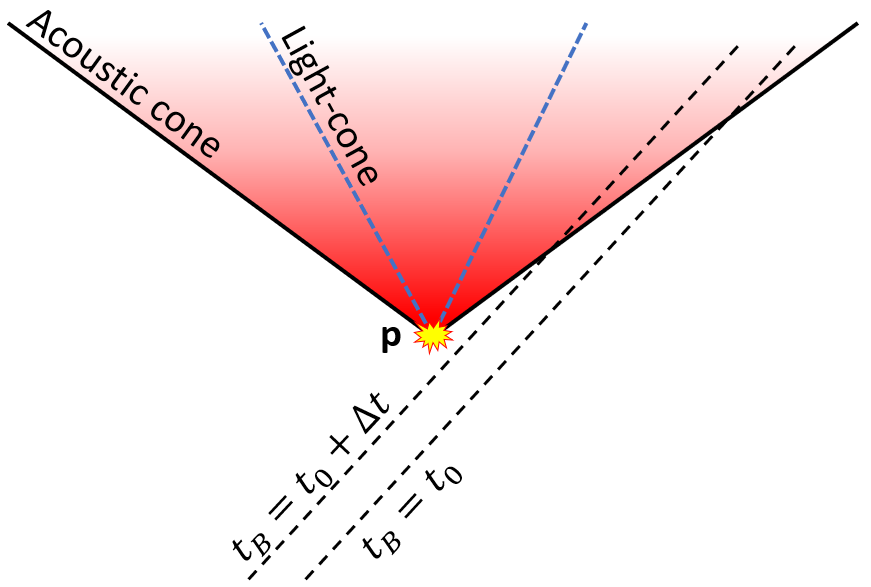}
	\caption{Minkowski diagram of the ``acoustic-cone argument''. An external source at $p$ (yellow star) generates a perturbation in the medium, which propagates within the outermost characteristic cone of the field equations (acoustic cone), and decays, by dissipation, as we move away from the source. If the field equations are acausal, the acoustic cone extends outside the light-cone. Hence, there is an observer Bob in whose frame the perturbation exists before $p$ has occurred. On the region $\{ t_B<t_B(p) \}$, Bob observes a solution of the source-less field equations, which is at equilibrium for $t_B  \ll t_B(p)$, but grows \textit{spontaneously} as $t_B$ approaches $t_B(p)$ from below.}
	\label{fig:fig55}
	\end{center}
\end{figure}

Let us first consider the case in which the theory is causal. Then, the acoustic cone is contained within (or overlaps) the light-cone. Therefore, \textit{all observers} experience the events in the following order: first $p$ (external source), then the tip of the cone (``intense perturbation''), then the rest of the cone (``damped perturbation''). Hence, all observers will agree that the equilibrium state is stable against perturbations. We have recovered Theorem 1 (at least qualitatively). Furthermore, if we assume that the source at $p$ excites all the Fourier modes, it is easy to recover Theorem 2.

Let us now move to the case in which the theory is acausal. In this case, a portion of the acoustic cone exits the light-cone. Thus, there is an observer (Bob) who measures the perturbation \textit{before} $p$ has occurred. In Bob's frame, as $t_B$ approaches $t_B(p)$ from below, the portion of the acoustic cone that intersects the hyperplane $\{ t_B = \text{const} \}$ gets closer to the tip of the cone, see figure \ref{fig:fig55}. This implies that: 
\begin{itemize}
\item for $t_B \ll t_B(p)$, the system is at equilibrium (the hyperplane $t_B = \text{const}$ is far from the tip of the cone);
\item for $t_B < t_B(p)$, the perturbation grows for increasing $t_B$;
\item at $t_B = t_B(p)$, the perturbation has a peak of intensity.
\end{itemize}
On the other hand, on the space-time region $\{ t_B<t_B(p) \}$, the perturbation is a solution of the field equations \textit{without} sources, because the only source is located at $p$. Therefore, we have shown that there is a solution of the source-less field equations, with initial data close to equilibrium [for $t_B \ll t_B(p)$], which departs from equilibrium at finite $t_B$ [just before $t_B(p)$]. This is a signature of instability, in Bob's frame. We have recovered the argument of section \ref{ilprimoluiluilui}: if the source of a perturbation can be delayed, then the system can spontaneously depart from equilibrium, in advance. But we have also recovered the argument of section  \ref{thetought}: just identify the wave-packet of figure \ref{fig:fig} with the front of the perturbation induced by $p$ (like a discontinuity, the front travels along the boundary of the acoustic cone \cite{Hishcock1983}).

At this point, we need to make a clarification. \citet{kessence} have suggested that, if the acoustic cone is larger than the light-cone, then one should just use the acoustic cone, in place of the light-cone, to define the causal structure of the space-time, and treat observers like Bob (figure \ref{fig:fig55}) as ``inappropriate'' observers, because they are not free to set the initial data at will. In this way, all paradoxes are avoided, and one has a new notion of causality. Their reasoning is valid, but we are working in different contexts. They are interested in what would happen in a universe in which there was some physical field which breaks the general-relativistic notion of causality at the \textit{fundamental level}: for them, the limitations of Bob are real. On the other hand, here we are assuming that general-relativistic causality is fundamentally valid in our Universe (hence, Bob is physically capable of shaping the system), but we are using a field theory that contradicts such principle. This is the actual origin of all paradoxes: not equations that break causality, but Cauchy problems that combine acausal theories with initial data on arbitrary space-like surfaces \cite{Kost2000}.

\subsection{Example: the boosted heat equation anti-diffuses!}\label{bheannn}

Using the ``acoustic-cone argument'' outlined above, we are finally able to show that the instability of the heat equation in moving reference frames \cite{Kost2000} is a consequence of its acausality. To this end, we consider the following thought experiment. A heat-conductive medium is at rest in Alice's frame. For $t_A<0$, the temperature is everywhere zero. At $t_A=0$, Alice injects a Dirac-delta of energy in the location $x_A=0$. For $t_A>0$, the spike of energy diffuses across the medium, according to the heat equation. The temperature field is therefore given by \cite{Morse1953}
\begin{equation}
T(t_A,x_A) = \dfrac{\Theta(t_A)}{\sqrt{4\pi D t_A}} \exp \bigg(  -\dfrac{x^2_A}{4Dt_A}\bigg) \, .
\end{equation} 
It can be easily verified (see \citet{Rauch_book}, section 1.7, Problem 3) that this function is indeed a $C^\infty$ solution of the heat equation for all values of $t_A$ and $x_A$, except at the point $p=(0,0)$, which is where the spike of energy is injected by Alice. Thus, when we boost to Bob's frame (treating $T$ as a scalar field \cite{MTW_book}),
\begin{equation}
T(t_B,x_B) =\dfrac{\Theta(t_B + v x_B)}{\sqrt{4\pi D \gamma (t_B + v x_B)}} \exp \bigg[  -\dfrac{\gamma (x_B + v t_B)^2}{4D(t_B + v x_B)}\bigg] \, , 
\end{equation}
and we restrict our attention to the spacetime region $\{ t_B<0 \}$, we obtain a $C^\infty$ solution of the boosted heat equation. In figure \ref{fig:green}, we show some snapshots of such solution.

\begin{figure}
\begin{center}
\includegraphics[width=0.5\textwidth]{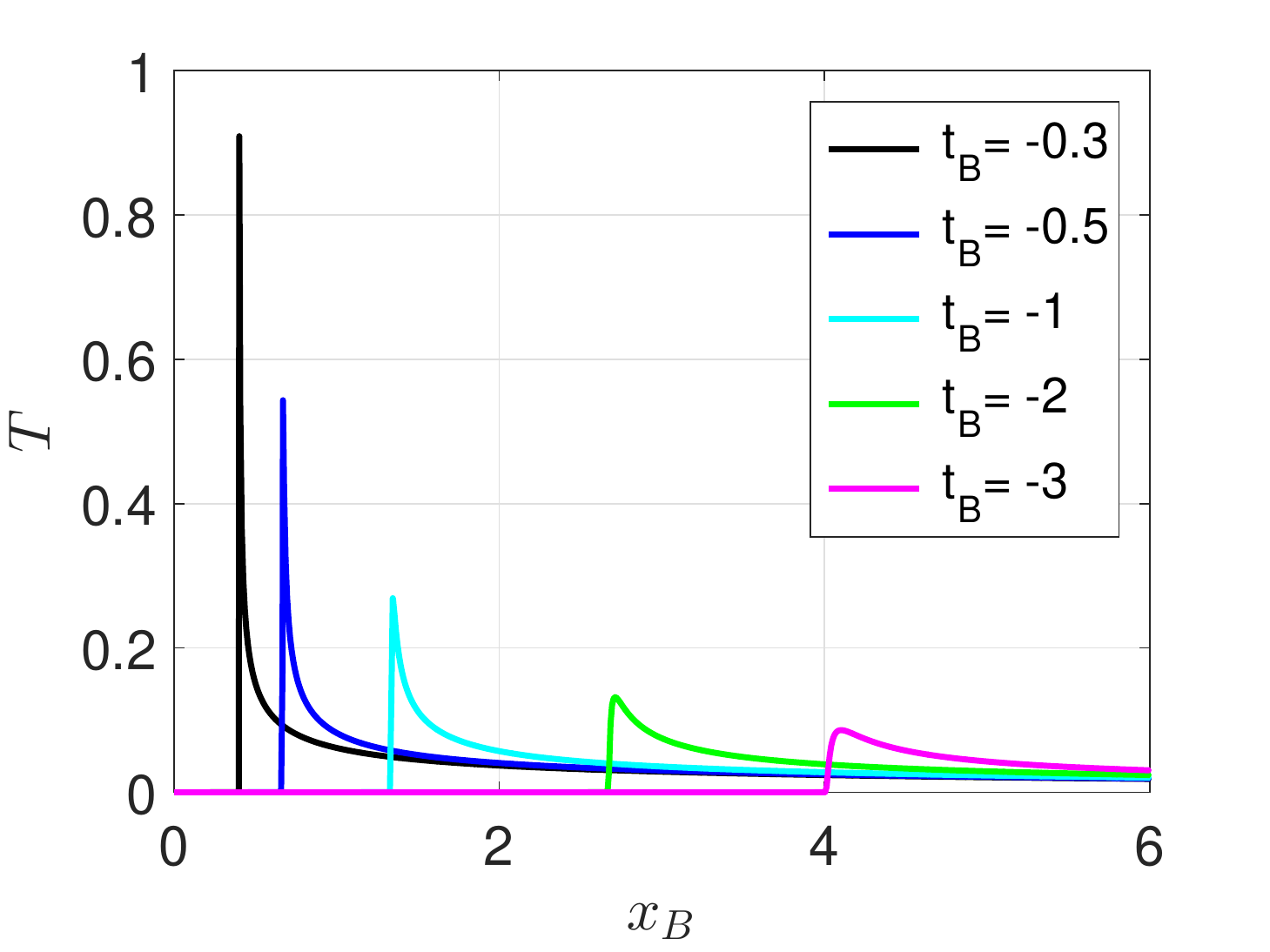}
	\caption{Boosted Green function of the heat equation, for $t_B<0$. We have set $D=30$ and $v=3/4$. Each curve represents a snapshot of $T(t_B,x_B)$, for different choices of $t_B$. If someone knows the entire history of the system, the interpretation of this figure is quite straightforward: Alice injects a spike of energy at $t_B=x_B=0$; because of acausality, a portion of such spike propagates towards the past; as it travels backward in time, the spike diffuses and flattens. On the other hand, to Bob (who cannot predict the decisions of Alice) the situation looks very different. From his perspective, the material is initially in thermodynamic equilibrium (at $t_B=-\infty$). Then, a perturbation builds up spontaneously, developing a superluminal front on the characteristic line $x_B=-t_B/v$. As time goes a head, the perturbation ``anti-diffuses'', becoming more and more peaked. Eventually, when $t_B \rightarrow 0$, the peak diverges at $x_B=0$. What we are observing is just an inversion of chronology (see figure \ref{fig:fig}).}
	\label{fig:green}
	\end{center}
\end{figure}

As we can see, the qualitative behaviour of $T(t_B,x_B)$ is consistent with our ``acoustic-cone argument''. Before Alice injects the spike, the temperature is already non-zero in Bob's frame: heat travels to the past! The characteristic line $x_B=-t_B/v$ (which is just the line $t_A=0$ expressed in Bob's coordinates) defines the ``acoustic cone'', and plays the role of a superluminal wave-front. There is a ``temperature wave'' on the right of such front, which is initially infinitesimal (for $t_B \ll 0$), and grows with time, ``anti-diffusing'', and becoming more and more peaked. In the end, $T$ develops a singularity at $t_B=0^-$. The very existence of a solution of this kind tells us that the boosted heat equation is ``anti-dissipative'' and unstable. 

But there is more. Let us focus on the infinite strip $ \{ t_B, x_B \} \in [-1,0) \times \mathbb{R}$. As we said, $T$ is $C^\infty$ on such strip. In addition, the right tail of $T$ decays faster than exponentially, while the left tail is identically zero. Therefore, we have constructed a solution of the boosted heat equation, whose initial data at $t_B =-1$ is regular (i.e. smooth and with well-defined Fourier transform), which nevertheless develops a singularity as $t_B \rightarrow 0$. It follows that the boosted heat equation must be ill-posed \cite{Kost2000}. This fact is not surprising. The boost has inverted the chronology of the heat equation (see subsection \ref{thetought}), converting it from diffusive to ``anti-diffusive''. Hence, the boosted heat equation should share some similarities with the ``backward heat equation'', $-\partial_t T = D \partial_x^2 T$, which is renowned for its ill-posedness.

\section{Some quick applications}\label{apliacia}

As we said in the introduction, a relativistic theory should pass three tests, to be considered reliable:
\begin{itemize}
\item[(i)] Causality,
\item[(ii)] Stability in the background's rest frame,
\item[(iii)] Stability in reference frames in which the background is moving.
\end{itemize}
Usually, one is content of verifying these properties at list for linear deviations from equilibrium, although in principle conditions (i,ii,iii) should be valid also in the non-linear regime.

The main message of this paper is that, once properties (i,ii) have been tested, assessing property (iii) is superfluous. In fact, if causality is violated, we know from the argument of section \ref{thetought} that the theory will be unstable (if dissipative). Furthermore, from the acoustic-cone argument of section \ref{unishiipunti}, we are also able to predict exactly in which reference frames the problems appear. If, on the other hand, (i,ii) are respected, then, by Theorems 1 and 2, (iii) follows automatically. Below we list some direct applications of the present results, which span all areas of relativistic physics, including heavy-ion collision simulations (point 1), accretion-disk simulations (point 2), alternative theories for dissipation (points 3-7), models for turbulent flow (point 8), Chern-Simons magnetohydrodynamics (point 9) and multi-constituent fluids (points 10-14).
\begin{enumerate}
\item  \citet{Plumberg2021} have shown that viscous heavy-ion collision simulations explore regimes of causality violation. This surely introduces uncertainty, but how much uncertainty? Each discrete time step in a simulation introduces error, and may ``activate'' Fourier modes. Picture this error as a small source on the right-hand side of the field equations. As shown in figure \ref{fig:fig55}, the effect of a source is dissipated away in those reference frames in which the acoustic cone points \textit{entirely} towards the future. However, in the remaining frames, it triggers growing modes. Hence, a simulation is really non-reliable if and only if a part of the acoustic cone ``sinks'' below the numerical time-step hyper-surfaces. Plotting the acoustic cone will, thus, show the real entity of the problem (the formula for the acoustic cone can be deduced from the causality analysis of \citet{BemficaPRL2021}).
\item \citet{Fragile2018} have performed relativistic viscous hydrodynamic simulations of accretion disks, adopting the \citet{landau6} theory, which is acausal: the acoustic cone is the normal hyperplane to the fluid's velocity \cite{Kost2000}. Thus, our reliability criterion  (see point 1) is violated at any point where the flow velocity is not normal to the 3+1 foliation: these simulation are probably non-reliable. However, the choice of approximating the viscous stress as constant (during the primitive solve) may have had the effect of erasing the second time-derivatives, effectively collapsing the acoustic cone upon the foliation, removing the pathologies. This would explain why some of their simulations predict the existence of stable disks, which is surprising, given the violence of the acausality-induced instabilities (see section \ref{thetought}). We believe that this issue needs further investigation. 
\item \citet{Pu2010} have shown that second-order viscous hydrodynamics is stable if and only if it is causal (in the linear regime). An analogous result has been found by \citet{BrutoThird2021} for third-order viscous hydrodynamics. We are in the position to predict that the same will also be true for higher-order viscous hydrodynamics. 
\item \citet{Lopez11} have formulated a relativistic theory for heat conduction, proving that it satisfies conditions (i,ii). Theorem 2 implies that also condition (iii) is satisfied: the theory is stable.
\item \citet{SrickerOttinger2019} have formulated a relativistic viscous theory for liquids. In \cite{SrickerOttinger2019}, they verify that, for some choice of parameters, condition (ii) is respected. However, we can see from figures 1,2,3 of \cite{SrickerOttinger2019} that, for this same choice of parameters, the front velocity of some Fourier modes is super-luminal. Since the signal velocity is not smaller than the front velocity \cite{Krotscheck1978}, we can conclude that the liquid under consideration violates causality and is, therefore, unstable in some reference frames.
\item \citet{VanStableFirst2012} have formulated a relativistic theory for viscosity and heat conduction, showing that it respects condition (ii). However, upon inspection of the last column of their matrix $\textbf{R}$ [equation (34)], we see that the field equations are not hyperbolic \cite{Hishcock1983}, suggesting the presence of causality violations and, thus, of instabilities. Indeed, if (in $\textbf{R}$) we impose $\Gamma=\gamma \tilde{\Gamma}$ and $k=i\gamma v \tilde{\Gamma}$ (spatially homogeneous solution in a boosted frame \cite{Hiscock_Insatibility_first_order}), we find that there is one growing solution for any $v \neq 0$.
\item  \citet{VanBiro2014} have formulated another theory for viscous hydrodynamics, similar to that discussed above. Unfortunately, it suffers exactly from the same problems as the previous one: the matrix $\textbf{R}$ [equation (38)] models acausal perturbations, which become unstable when boosted.
\item The Smagorinsky model \cite{Smagornki1963} is a filtered theory for modelling turbulent flows in large eddy Newtonian simulations.
\citet{Celora2021} have shown that, if the same approach is lifted to a relativistic setting, the resulting model is not  ``covariantly stable'', i.e. it satisfies condition (ii) but not condition (iii). Applying Theorem 2, we can conclude that the relativistic Smagorinsky model is acausal.
\item  \citet{Kiamari2021} have shown that Chern-Simons magnetohydrodynamics is causal, but unstable in the rest frame. Using Theorem 2, we can conclude that the theory must be unstable in every reference frame.
\item Many relativistic fluids can be modelled as reacting mixtures \cite{Burrows1986,BulkGavassino,Alford2020}. For a perfect-fluid reacting mixture, the rest-frame stability conditions coincide with the ``textbook'' conditions for thermodynamic stability \cite{GavassinoGibbs2021}, while the causality condition is simply the requirement that the sound-speed at frozen chemical fractions should not exceed the speed of light \cite{CamelioBulk1_2022}. Under these assumptions, by Theorem 2, a mixture is stable in all reference frames.
\item Most models for radiation hydrodynamics assume that there is a matter fluid with stress-energy tensor $M^{ab}$ and a radiation fluid with stress-energy tensor $R^{ab}$, which interact dissipatively though the equation $\nabla_a M^{ab}=-\nabla_a R^{ab}=G^b$, where $G^b$ is a hydrodynamic force \cite{Farris2008,Sadowski2013,GavassinoRadiazione}. Since $G^b$ usually does not depend on the gradients, its presence does not modify the characteristic determinant of the system. Therefore, the causality properties of the two fluids are unaffected by the coupling: if the dynamics of the matter fluid is acausal, the total radiation-hydrodynamic theory will also be acausal. On the other hand, radiation hydrodynamics is dissipative by construction \cite{Weinberg1971,GavassinoRadiazione}. Therefore, invoking the argument of section \ref{thetought}, we can conclude that all acausal fluids become unstable, when coupled with radiation through $G^b$.
\item The argument above can be easily generalised: assume that an arbitrary number of fluids and classical fields interact dissipatively through some equations $\nabla_a T^{ab}_{n}=G_{n}^b$ ($n$ is an index counting the fluids), where $G_{n}^b$ does not depend on the gradients. Then, if any of these fluids is acausal (and its dissipative coupling with the other fluids is not zero), the resulting composite system is unstable.
\item \citet{Carter_starting_point} have formulated a relativistic theory for superfluid mixtures. The simplest way of implementing dissipation in their theory is by coupling the currents through hydrodynamic forces which do not contain gradients \cite{GavassinoUEIT2021} (analogously to the case above). It follows that dissipative superfluid mixtures (and, more in general, ``multifluids'') are stable only if their non-dissipative analogue is causal. The only exception is when the dissipative coupling is mediated by quantum vortices \cite{langlois98,GavassinoIordanskii2021}, in which case the drag force depends non-linearly on the gradients, changing completely the causal structure of the system.
\item Superfluid neutron stars exhibit a phenomenon called ``entrainment'', according to which the superfluid momentum of the paired neutrons is not collinear to the flow of neutrons \cite{prix2004}. If we imagine to remove this effect, the acoustic cone becomes that of Carter's regular theory for heat conduction \cite{noto_rel}, which can be acausal, for certain equations of state \cite{OlsonRegularCarter1990}. Hence, the existence of the entrainment may be necessary to guarantee the stability of the equilibrium. The thermodynamic origin of this fact is studied in another work \cite{GavassinoStabilityCarter2022}.
\end{enumerate}

\section{Conclusions}


In this paper, we have identified the physical mechanism that connects causality, stability, and dissipation. Our reasoning can be summarised as follows. First, we have abstracted from the general notion of ``dissipation'' its key feature, namely the existence of a decaying-over-time scalar field (which measures ``how large'' a perturbation is at a point). Next, we have interpreted the word ``stability'' as the statement that all possible observers agree on the fact that such field is non-increasing with respect to their proper time. Finally, we have set up a simple argument: suppose that a perturbation moves superluminally (i.e., outside the light cone) and decays over time from the point of view of one observer. Because the perturbation is superluminal, it links causally disconnected space-time points which can, via a Lorentz transformation,
be chronologically inverted, making the decaying quantity appear increasing from the point of view
of another observer. In a nutshell, the lack of causality always allows one to transform dissipation
into ``anti-dissipation'' (i.e. dissipation backward in time). This also explains why acausal theories always turn out to be thermodynamically unstable \cite{GavassinoCausality2021}.

As a concrete example, we have studied how the retarded Green function of the heat equation transforms under Lorentz boosts. We have found that, due to relativity of simultaneity, one of  its Gaussian tails must always ``sink'' to the past (no matter how small the boost velocity), so that the boosted Green function presents an advanced part. This acausal precursor undergoes an inversion of chronology: it ``anti-diffuses'', instead of diffusing (see figure \ref{fig:green}). As a consequence, thermodynamics now is time-reversed: spikes tend to pinch (instead of flattening), energy tends to concentrate (instead of spreading), and the medium wants to move away from equilibrium (rather than towards it). That is why the boosted heat equation is unstable \cite{Hiscock_Insatibility_first_order}, anti-dissipative \cite{GavassinoUEIT2021}, and ill-posed \cite{Kost2000}.

With a similar reasoning, we have rigorously proved that, instead, if a \textit{causal} theory is stable in one reference frame, it is stable in all reference frames. The reason is that Lorentz transformations can never invert the chronological order of causally connected events: a decaying subluminal perturbation cannot be Lorentz-transformed into a growing one. In other words, causality guarantees that the ``thermodynamic arrow of time'' points towards the future in all reference frames, not only in the rest frame. 


Our analysis reveals that the causality-stability assessment is much easier than we thought, because the boosted-frame stability analysis (which is notoriously the most difficult part) is superfluous. Causality alone takes care of ensuring the Lorentz-invariance of a stability assessment, which can just be performed in a preferred reference frame. This result is a more general formulation of Theorem III of \citet{BemficaDNDefinitivo2020} and of the ``inverse argument'' of \citet{GavassinoCausality2021}. The main advantage of our Theorems 1 and 2 is that they do not make \textit{any assumption} about the structure of the field equations, besides causality.

We have also formulated a general criterion, based on the notion of ``acoustic cone'', which allows one to predict exactly in which reference frames an acausal theory becomes problematic. This criterion can be used to understand whether the reliability of state-of-the-art heavy-ion collision simulations \cite{Plumberg2021} is really compromised by the causality violations of Israel-Stewart-type theories. 


This paper has clarified several fundamental aspects of relativistic hydrodynamics and thermodynamics, providing a definitive answer to some old open questions:
\begin{itemize}
\item[1)] \textit{What is the ``physical interpretation'' of the instabilities that we observe in relativistic hydrodynamics?} They are just dissipative processes under time reversal. Without causality, there is no absolute notion of chronology, because the ``cause'' and the ``effect'' may be exchanged via a Lorentz boost. As a result, the ``thermodynamic arrow of time'' may point towards the past, for some observers. When this happens, systems evolve away from equilibrium, rather than towards it. That is why these instabilities are present in some reference frames and not in others.
\item[2)] \textit{Is it possible to make these instabilities small enough to be irrelevant?} No! If the beginning and the end of a process can be chronologically reordered via a boost, then there is an intermediate reference frame in which they are simultaneous. In such frame, the whole process occurs instantaneously. Therefore, one cannot hope that the instabilities will grow ``slowly'' (for a given acausal theory), because there is always some reference frame in which the growth rate is infinite.
\item[3)] \textit{Why does this problem appear only when we turn on dissipation?} Because non-dissipative theories are invariant under time reversal (strictly speaking, they are invariant under CPT \cite{weinbergQFT_1995}). Hence, in the absence of dissipation, an inversion of chronology does not produce any observable effect on the laws of thermodynamics.
 \item[4)] \textit{Is it possible to observe a similar phenomenon in Newtonian physics?} No. In Newtonian physics, time (and in particular chronology) is absolute. As a consequence, the thermodynamic arrow of time is Galilei-invariant, and all observers agree on whether a system is stable or not. 
\end{itemize}




Theorems \ref{theo} and \ref{theo2} are also interesting from the point of view of the foundations of relativistic thermodynamics. In fact, the essence of these theorems may be summarised as follows: \textit{if a system exhibits a tendency to evolve towards thermodynamic equilibrium in one frame of reference, it exhibits the same tendency in all frames, provided that the principle of causality holds}. This suggests that, once thermodynamics is valid in one reference frame, it should ``look the same'' in all reference frames. This is perfectly in line with our recent proof \cite{GavassinoLorentzInvariance2021} of van Kampen's argument \cite{vanKampen1968} for the existence of a relativistically covariant theory of thermodynamics. There, causality and stability were implicitly assumed when the concept of ``kick'' was introduced.


\section*{Acknowledgements}

This work was supported by the Polish National Science Centre grant OPUS 2019/33/B/ST9/00942. The author thanks M. Antonelli, B. Haskell and F. Bemfica for reading the manuscript and providing useful comments. I am particularly grateful to M. Disconzi, whose observations have allowed me to significantly improve the mathematical rigour of the discussion. My gratitude also goes to the editorial team and the referees of PRX: their recommendations and criticisms played a crucial role in giving this paper its final form.

\appendix

\section{The relativistic stability assessment}\label{AAAAAAAAA}

Let us compare the Galileian boost with the Lorentz boost (we ignore the variables $y$ and $z$):
\begin{equation}\label{TXxt}
\text{Galilei:} \quad
    \begin{cases}
      t_A=t_B \\
      x_A=x_B+v  t_B
    \end{cases}
    \spc
    \text{Lorentz:} \quad
        \begin{cases}
      t_A=\gamma (t_B + v x_B)\\
      x_A=\gamma (x_B + v t_B)
    \end{cases}
\end{equation}
Besides the Lorentz factor $\gamma=(1-v^2)^{-1/2}$, there is an additional term in the Lorentz boost that catches the eye: the position-dependent shift ``$\, vx_B \,$'' in the relativistic transformation of time. This term is responsible for a counter-intuitive phenomenon called ``relativity of simultaneity'', according to which two events that are simultaneous for one observer ($\Delta t_B=0$) may not be simultaneous for another observer ($\Delta t_A = \gamma v \Delta x_B \neq 0$). It is  this effect that makes the relativistic stability assessment more complicated than its Newtonian counterpart. Let us see why. 

\subsection{Boosted Fourier modes are no longer Fourier modes!}\label{appendixB1}

A physical system is in thermodynamic equilibrium. We perturb it a bit. We expect that, after an initial transient, the system will relax back to equilibrium. If, instead, the perturbation grows with time, we say that the theory is ``unstable''.

In practice, given a system of partial differential equations, how do we assess the stability of the equilibrium? The standard approach is the same both in Newtonian physics and in relativity, and works as follows. Let us say, for clarity, that we are interested in tracking the evolution of the local temperature $T(t_A,x_A)$, interpreted as a scalar field \cite{MTW_book}.  For small perturbations, we can work in the linear approximation, and expand a generic solution of the field equations as a superposition of sinusoidal plane-wave solutions (Fourier modes). For each of these solutions, the perturbation to the local temperature takes the form below:
\begin{equation}\label{TAAAAAZ}
\delta T (t_A , x_A) = e^{\Gamma_A t_A} \sin(k_A x_A -\omega_A t_A +\phi) \, ,
\end{equation}
where $\Gamma_A$, $k_A$, $\omega_A$, $\phi$ are real numbers, which do not depend on the spacetime location. The numbers $k_A$ and $\phi$, called respectively ``wavenumber'' and ``phase'', are treated as free parameters, whereas $\Gamma_A$ and $\omega_A$, called respectively ``growth rate'' and ``frequency'', are constrained by the equations of motion, and depend on $k_A$. It is evident that, if $\Gamma_A$ is always non-positive (for all values of $k_A$), the system is stable, otherwise it is unstable.

\begin{figure}
\begin{center}
\includegraphics[width=0.45\textwidth]{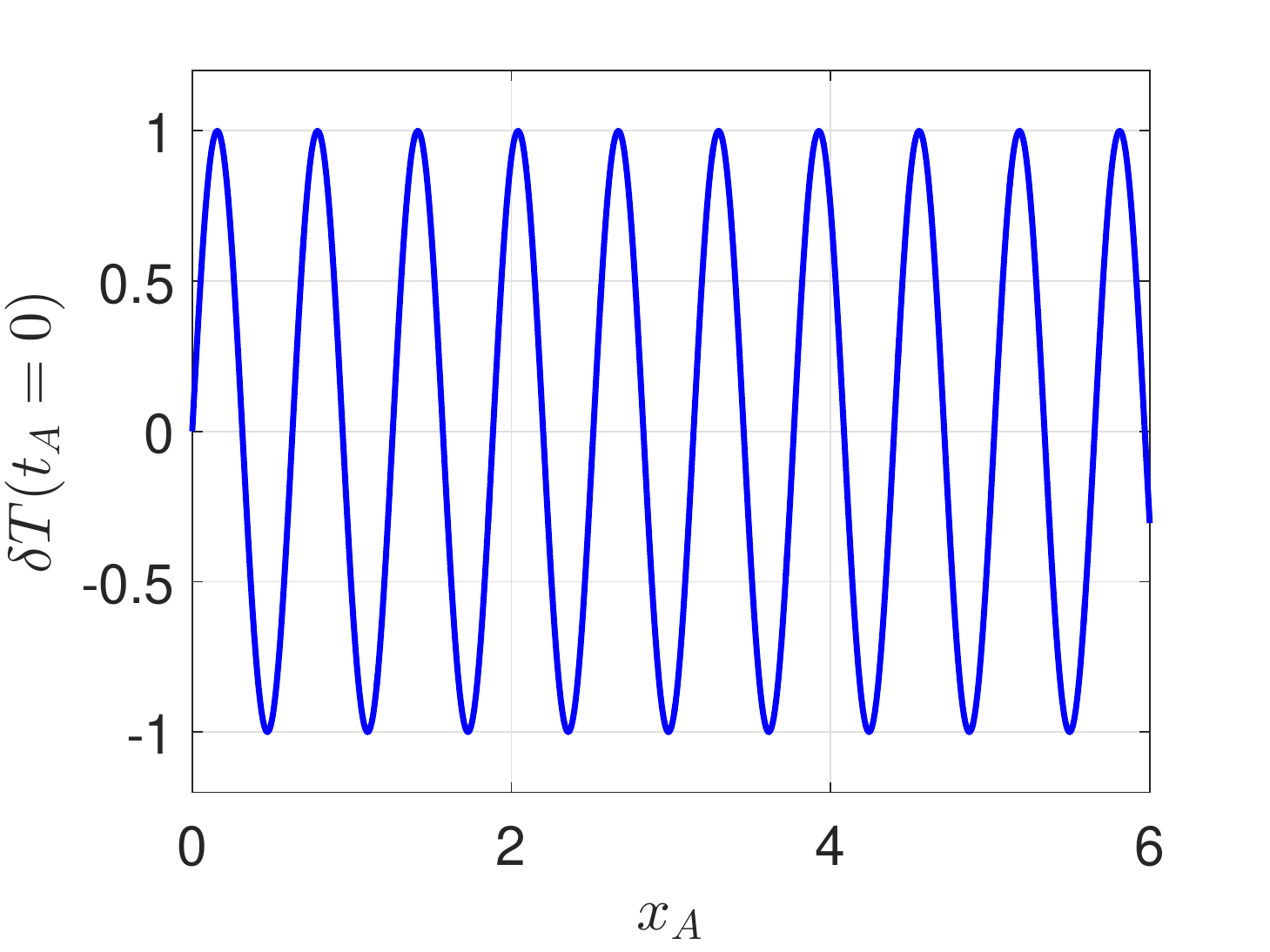}
\includegraphics[width=0.45\textwidth]{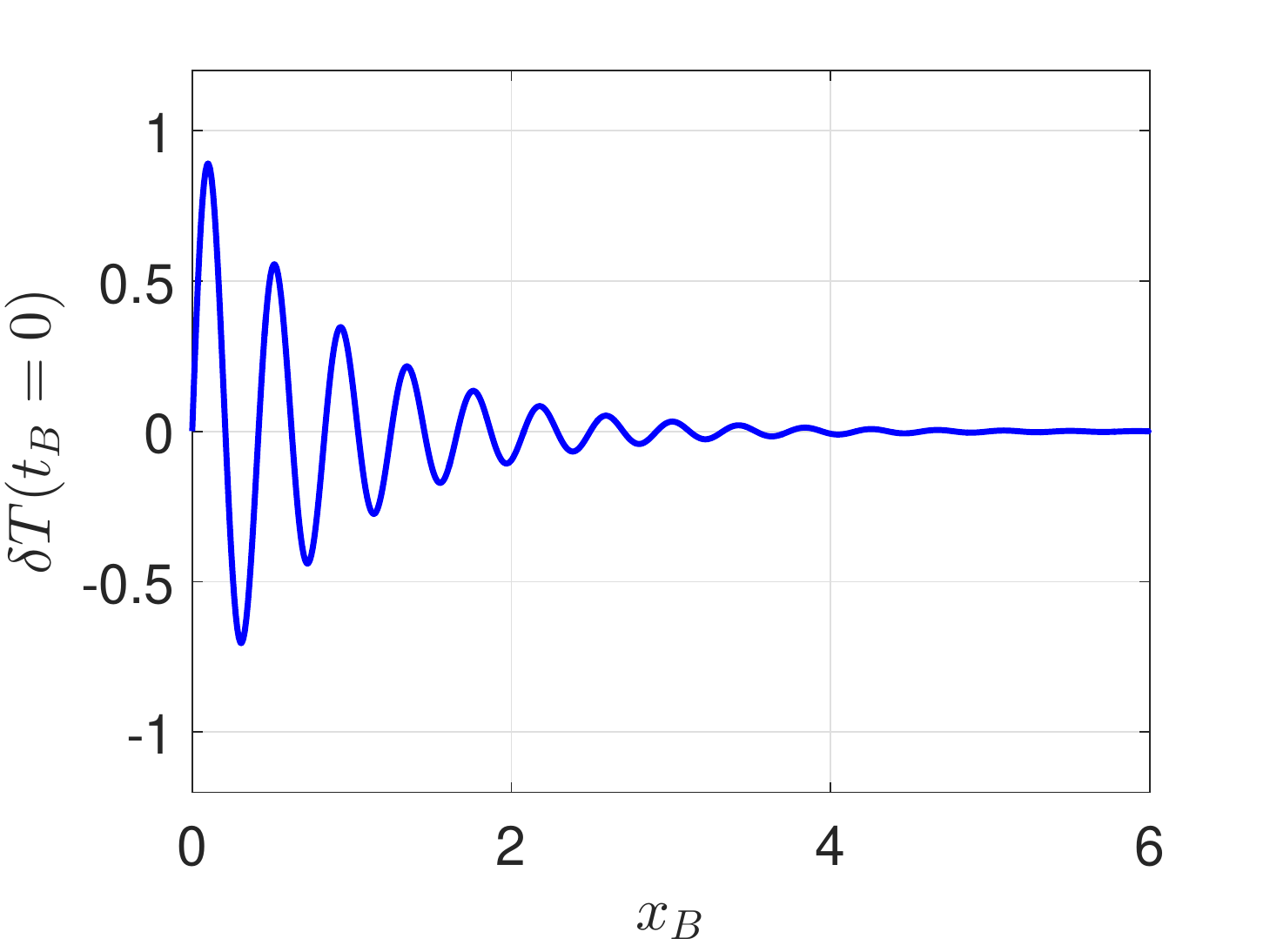}
	\caption{The same plane wave viewed by two observers in motion with respect to each other. The full spacetime dependence of the temperature perturbation is assumed to be $\delta T(t_A,x_A)=e^{-t_A}\sin(10 \,x_A)$. In the ``$\, A \,$'' frame, we have a conventional Fourier mode (left panel), whose amplitude decays in time. In the ``$\, B \,$'' frame, there is an exponential modulation also in space (right panel). This is a direct consequence of the relativity of simultaneity: if something decays only in time in one frame, it may decay both in time and space in another frame. We have chosen the boost velocity $v=3/4$, which corresponds to the Lorentz factor $\gamma \approx 1.5$.}
	\label{fig:Planes}
	\end{center}
\end{figure}

The only difference between a Newtonian stability analysis and a relativistic stability analysis lies in what happens when we change the frame of reference. Our intuition suggests that, once we have verified that the system is stable in one reference frame, it should be stable in all reference frames. And, indeed, this is true in Newtonian physics. In fact, when we change reference frame (in a Newtonian world), equation \eqref{TAAAAAZ} is transformed into
\begin{equation}\label{TBBBBBBZ}
\delta T (t_B , x_B) = e^{\Gamma_B t_B} \sin(k_B x_B -\omega_B t_B +\phi) \, ,
\end{equation}
with $\Gamma_B=\Gamma_A$, $k_B=k_A$, and $\omega_B=\omega_A-vk_A$. As we can see, the Galileian boost always maps sinusoidal plane waves into sinusoidal plane waves, with the same wavenumber and growth rate. Hence, if $\Gamma_A(k_A)$ cannot be positive, neither $\Gamma_B(k_B)$ can. However, things change dramatically in relativity. In fact, when we make a Lorentz boost, relativity of simultaneity mixes space with time, and the exponential in \eqref{TAAAAAZ} becomes
\begin{equation}\label{gammafavtfp}
e^{\Gamma_A t_A} = e^{\Gamma_A \gamma t_B} e^{\Gamma_A \gamma v x_B} \, .
\end{equation}
Because of the extra factor $e^{\Gamma_A \gamma v x_B}$, the wave is no longer sinusoidal in the boosted frame (see figure \ref{fig:Planes}), unless $\Gamma_A=0$. This is telling us that solutions of the form \eqref{TAAAAAZ} are intrinsically different from solutions of the form \eqref{TBBBBBBZ}. One is not the boosted version of the other! We cannot even express \eqref{TAAAAAZ}  as a superposition of solutions like \eqref{TBBBBBBZ}, because the factor $e^{\Gamma_A \gamma v x_B}$ has a divergent tail for $x_B \rightarrow \infty$ (plus or minus, depending on the sign of $v\Gamma_A$), so that the plane wave \eqref{TAAAAAZ} does not have a well-defined Fourier transform in the $B$ frame. This can lead to a surprising phenomenon: sometimes, a system is stable in one reference frame, but unstable in another one.

\subsection{The case of the heat equation}

The most striking example of how relativity of simultaneity can destabilize a system is the case of the heat equation:
\begin{equation}\label{btz}
\dfrac{\partial T}{\partial t_A} = D \dfrac{\partial^2 T}{\partial x_A^2} \, .
\end{equation}
In the ``$\, A \,$'' frame, this equation is clearly stable. In fact, if we plug \eqref{TAAAAAZ} into \eqref{btz}, we obtain $\Gamma_A=-Dk_A^2 \leq 0$. No Fourier mode can grow. However, quite surprisingly, there are unstable Fourier modes in all other frames of reference. For example, consider a solution of the form $\delta T(t_B)=e^{\Gamma_B t_B}$. In the ``$\, B \,$'' frame, this is a sinusoidal plane wave, with $k_B=0$. Physically, it models a configuration with no gradients in space for observer $B$. Intuitively, we would then expect that the only possible solution will be $\Gamma_B=0$ (no gradients $\Rightarrow$ no heat flux $\Rightarrow$ no temperature changes). However, this is not the case. Due to relativity of simultaneity, $\delta T$ acquires an exponential profile in the $A$ coordinates:
\begin{equation}\label{buconerozzo}
\delta T(t_A,x_A) =e^{\Gamma_B \, t_B} = e^{\Gamma_B \gamma (t_A-v x_A)} = e^{\Gamma_B \gamma t_A} e^{-\Gamma_B \gamma v x_A} \, .
\end{equation}
As a consequence, when we plug \eqref{buconerozzo} into \eqref{btz}, we obtain two possible solutions. One is $\Gamma_B = 0$. The other is
\begin{equation}
\Gamma_B = \dfrac{1}{D\gamma v^2} >0 \, .
\end{equation}
As we can see, in the $ B $ frame, the temperature is allowed to grow uniformly (with no bound), even in the absence of spatial gradients. Note that this is not in contradiction with the Fick law (``fluxes'' $\propto$ ``spatial gradients''), because in the the rest frame of the medium (the $A$ frame) there are gradients! In subsection \ref{bheannn} we will finally explain why the boosted heat equation \textit{must necessarily} be unstable.

\section{The Perturbation-Intensity field}\label{appendoxB}

The essence of Theorem 1 is the following: if a \textit{causal} perturbation with compact support converges to zero uniformly in Alice's frame, it converges to zero uniformly also in Bob's frame. To prove it, we only rely of the existence of a scalar field $\varphi$ that measures ``how large'' a perturbation is at a point. The simplest way of constructing $\varphi$ is the following. 
Suppose that Alice and Bob are interested in measuring a finite set of relevant scalar observables $\mathcal{O}_n$ (e.g temperature $T$, pressure $P$, chemical potential $\mu$, electromagnetic field-strength $F_{ab}F^{ab}$, etc...) at each spacetime event. Then, they may define $\varphi$ as
\begin{equation}
\varphi = \sum_n \big[ \mathcal{O}_n - \mathcal{O}_n^{\text{eq}}  \big]^2 \, ,
\end{equation}
where $\mathcal{O}_n^{\text{eq}}$ is the equilibrium value of $\mathcal{O}_n$. Viewed under this light, the theorem tells us that, if the differences $\mathcal{O}_n - \mathcal{O}_n^{\text{eq}}$ go below experimental resolution uniformly in Alice's frame, then the same happens in Bob's frame, provided that $\mathcal{O}_n = \mathcal{O}_n^{\text{eq}}$ on $\mathcal{D}^+(\mathcal{R}^c)$. 

On the other hand, one would like also to interpret the theorem in a more ``mathematical'' sense. For a given deterministic field theory, with some set of field equations, what are the underlying assumptions that make Theorem 1 applicable? We address this (rather technical) problem in the remaining part of this appendix. 

\subsection{What is a perturbation?}\label{techno}

We consider the following Cauchy problem:
\begin{equation}\label{Cauchyiuz}
\begin{cases}
   \mathcal{F}_h(\psi_i, \nabla_a \psi_i , \nabla_a \nabla_b \psi_i, ...)=0 \spc \text{on }\mathring{\mathcal{D}}^+(\Sigma) \\
  \psi_i=f^{(0)}_i, \, \, \, \,\, n^a\nabla_a \psi_i = f^{(1)}_i,... \spc \text{on }\Sigma \, ,
  \end{cases}
\end{equation}
where $\psi_i$ are the fields of the theory, $\Sigma$ is the space-like Cauchy 3D-surface introduced in the main text, $n^a$ is the unit normal to $\Sigma$, $\{ f^{(n)}_i \}_{i,n}$ is a set of functions on $\Sigma$ (they constitute the initial data), and $\mathcal{F}_h$ are some tensor-valued functions, which are smooth in all the arguments. We also assume that $\Sigma$ is smooth and we restrict our attention to smooth initial data. The following two assumptions are standard \cite{Wald}, but not so easy to guarantee in general\footnote{For example, we know that the Israel-Stewart theory is not globally well-posed \cite{DisonziISbraks2020}: a singularity may appear, in finite time, even for smooth initial data.}:
\begin{itemize}
\item The Cauchy problem \eqref{Cauchyiuz} is globally well-posed, i.e. the solution exists, is unique, and depends continuously on the initial data [across all $\mathcal{D}^+(\Sigma)$];
\item The field equations are causal, i.e. if the initial data for $\psi_i^\star$ agrees with that of $\psi_i$ on a subset $\mathcal{S}$ of $\Sigma$, then $\psi_i^\star=\psi_i$ on $\mathcal{D}^+(\mathcal{S})$.
\end{itemize}
Now we can define rigorously what we mean by a ``perturbation''. Any state of global thermodynamic equilibrium is modelled, in a deterministic field theory, as a specific solution of the field equations, with certain properties (e.g. $\nabla_a s^a=0$ and $\nabla_a \beta_b+\nabla_b \beta_a =0$ \cite{Israel_Stewart_1979}). We simply call $\psi_i$ a solution of this kind, which plays the role of the background equilibrium state. A localised perturbation (of the type considered in subsection \ref{sez3}) is an other solution $\psi_i^\star$, whose initial data agrees with that of $\psi_i$ on $\mathcal{R}^c$, but differs on $\mathcal{R}$ (there is no need for $\psi_i^\star$ to be ``close to $\psi_i$'', inside $\mathcal{R}$). Then, by causality, we know that
\begin{equation}\label{psiiprime}
\psi_i^\star = \psi_i  \spc \text{on }\mathcal{D}^+(\mathcal{R}^c).
\end{equation}
The final step consists of constructing a scalar field $\varphi$ which quantities how far $\psi_i^\star$ is from $\psi_i$ at a point. There are infinitely many ways of constructing such a field, but the simplest one works as follows: taken a preferred tetrad $e_A=e_A^a \partial_a$ (with $\nabla e_A=0$), and its dual $e^A=e^A_a dx^a$, introduce the operation
\begin{equation}
||A||_e^2 := \sum_{A_1,...,A_l,B_1,...,B_k}^{\text{from }0 \text{ to }3} |A(e^{A_1},...,e^{A_l},e_{B_1},...,e_{B_k})|^2 \, , 
\end{equation}
for a generic complex-valued $(l,k)$-tensor $A$. Then, define $\varphi$ as
\begin{equation}\label{piffuz}
\varphi =\sum_{i}  ||\psi_i^\star- \psi_i||_e^2 \, .
\end{equation}
It is evident that, given a space-time point $p$, $\varphi(p)=0$ if and only if $\psi_i^\star(p)=\psi_i(p)$. Hence, from equation \eqref{psiiprime}, we see that $\varphi=0$ on $\mathcal{D}^+(\mathcal{R}^c)$, proving that Definition 1 follows directly from causality. Furthermore, global well-posedness guarantees that $\varphi$ exists everywhere in $\mathcal{D}^+(\Sigma)$, which is another central assumption of the theorem.

\bibliography{Biblio}

\newpage

\section*{Supplementary material}

We test the predictions of the argument of section III.A (main text) for the case of the telegraph equation, showing that the resulting formula for the growth rate of the perturbation coincides with that computed with the Fourier analysis. We also verify explicitly that, in those reference frames in which the perturbation grows with time, the entropy grows with the perturbation. The calculations are performed in $1{+}1$ dimensions.

\maketitle

\subsection*{Implications of the argument}

The local temperature can be interpreted as a scalar field \cite{MTW_book}, provided that we define it as $T:=(\beta^a \beta_a)^{-1/2}$, where $\beta^a$ is the inverse-temperature four-vector \cite{Israel_2009_inbook,Becattini2016,GavassinoTermometri2020}.
Let's assume that, in Alice's rest frame, the scalar field $T$ is governed by the telegraph equation
\begin{equation}\label{fieldEquaz}
\dfrac{\partial_t^2 T}{w^2}+\dfrac{\partial_t T}{D} = \partial_x^2 T \, ,
\end{equation}
where $w>0$ and $D>0$ are some constant coefficients. For $w=+\infty$ we recover the heat equation, while for $D=+\infty$ we have a non-dissipative wave equation. Let us estimate the stability properties of equation \eqref{fieldEquaz}, just considering the physical setting outlined in section III.A of the main text.

The characteristic surfaces of \eqref{fieldEquaz} are
\begin{equation}
x = \pm w \, t + \text{const}.
\end{equation}
This implies that the fastest wave-packets allowed by the theory travel with speed $w$. Indeed, in the limit of highly-oscillating wave-packet (i.e., for infinitely large gradients), equation \eqref{fieldEquaz} becomes
\begin{equation}
(\partial_t^2  - w^2 \partial^2_x) T \approx 0 \, ,
\end{equation}
which is a wave equation with characteristic speed $w$. This suggests us that, to model a localised high-frequency wave-packet, we may consider the (approximate) ansatz solution\footnote{Since the perturbation travels with speed $w$, the coefficient $w$ considered here coincides with the factor $w$ introduced in the main text.}
\begin{equation}\label{Txx}
T(x,t) \approx e^{\Gamma t} \, T_0(x-wt) + \text{const} \spc (T_0 \text{ has compact support and is highly-oscillating}) ,
\end{equation}
where the exponential factor models the damping effect induced by the dissipative term $\partial_t T/D$ in \eqref{fieldEquaz}. Plugging this ansatz solution into \eqref{fieldEquaz}, and working in the limit of large gradients, we obtain a formula for $\Gamma$:
\begin{equation}
\Gamma = -\dfrac{w^2}{2D} \, .
\end{equation}
As expected, $\Gamma <0$, meaning that the perturbation is damped, in the reference frame of Alice. Finally, if we make the formal identification $\varphi \equiv \text{exp}(\Gamma t)$, we can use equation (5) of the main text to compute the growth-rate of the perturbation in the reference frame of Bob:
\begin{equation}\label{gammab}
\Gamma_B := \dfrac{d \ln \varphi}{dt_B} =\dfrac{1}{\gamma(1-vw)} \dfrac{d \ln \varphi}{dt}= -\dfrac{w^2}{2D\gamma (1-vw)} \, .
\end{equation}
Upon examination of equation \eqref{gammab}, we can conclude that:
\begin{itemize}
\item The wave equation ($D=+\infty$) is stable ($\Gamma_B =0$);
\item The telegraph equation ($w$ and $D$ finite) is unstable in Bob's frame if\footnote{Obviously, it is unstable also for $v<-w^{-1}$. To see this, one just needs to replace the right-travelling solution \eqref{Txx}, with the left-travelling solution $\text{exp}(\Gamma t) \, T_0(x+wt)$.} $v>w^{-1}$;
\item The heat equation ($w\rightarrow +\infty$) is unstable in Bob's frame for any $v \neq 0$.
\end{itemize}

\subsection*{Consistency with the Fourier analysis}

Let us, now, verify that equation \eqref{gammab} is consistent with the value of $\Gamma_B$ that one obtains computing the boosted dispersion relations directly.

If we work in Alice's frame, and consider a solution of the form
\begin{equation}
T(x,t)=T(0,0) \, e^{ikx-i\omega t} \, ,
\end{equation} 
equation \eqref{fieldEquaz} becomes
\begin{equation}\label{pollino}
\dfrac{\omega^2}{w^2} + \dfrac{i \omega}{D} - k^2 =0 \, .
\end{equation}
Recalling that $(\omega,k)$ transforms as a vector, we can relate $\omega$ and $k$ to the frequency $\tilde{\omega}$ and wave-vector $\tilde{k}$ in Bob's frame:
\begin{equation}\label{changuz}
\omega = \gamma (\tilde{\omega}+v \, \tilde{k}) \spc \spc k = \gamma (\tilde{k}+v \, \tilde{\omega}) \, .
\end{equation}
Plugging \eqref{changuz} into \eqref{pollino} we obtain
\begin{equation}
(1-v^2 w^2) \, \tilde{\omega}^2 + \bigg[ 2v(1-w^2)\tilde{k} + \dfrac{iw^2}{\gamma D}  \bigg] \tilde{\omega} + (v^2-w^2)\tilde{k}^2 + \dfrac{ivw^2}{\gamma D} \tilde{k}=0 \, .
\end{equation}
This produces two dispersion relations:
\begin{equation}\label{ilfinale}
(1-v^2 w^2) \, \tilde{\omega}_{\pm}(\tilde{k}) = -v(1-w^2)\tilde{k}-\dfrac{iw^2}{2\gamma D} \pm \dfrac{w}{\gamma} \sqrt{\dfrac{\tilde{k}^2}{\gamma^2} - \dfrac{iw^2}{\gamma D} v \tilde{k} - \dfrac{w^2}{4D^2}} \, .
\end{equation}
Recalling that the wave-packet \eqref{Txx} is a high-frequency solution, we can take the limit of large $\tilde{k}$. This allows us the expand the square root in \eqref{ilfinale}, leading to the following result:
\begin{equation}
 \tilde{\omega}_+ (\tilde{k}) = \dfrac{w-v}{1-vw} \, \tilde{k} - \dfrac{iw^2}{2 D \gamma (1-vw)}   \spc \spc  \tilde{\omega}_-  (\tilde{k})= \dfrac{-w-v}{1+vw} \, \tilde{k} - \dfrac{iw^2}{2 D \gamma (1+vw)} \, .  
\end{equation}
It is evident, from the real part of the dispersion relation, that $\tilde{\omega}_+$ describes a perturbation that in Alice's rest frame is drifting with velocity $+w$ (recall the Lorentz transformation of velocities). Analogously, $\tilde{\omega}_-$ describes a perturbation that in Alice's rest frame is drifting with velocity $-w$. If follows that the growth rate of the wave-packet \eqref{Txx}, as measured in Bob's frame, is
\begin{equation}
\Gamma_B = \text{Im} \, \tilde{\omega}_+ = - \dfrac{w^2}{2 D \gamma (1-vw)}  \, .
\end{equation}
This shows that the growth rate $\Gamma_B$ predicted using the argument of section III.A (main text) coincides with that extracted from the explicit Fourier analysis.

\subsection*{Violation of the maximum entropy principle}

Let us study the sign of the entropy perturbation carried by the wave-packet.

In order to perform a thermodynamic analysis, we need to know the constitutive relations of all the Noether currents of the system, plus the constitutive relation of the entropy current $s^a$ \cite{GavassinoLyapunov_2020,GavassinoGibbs2021,GavassinoCausality2021}. Here, to keep the discussion simple, we will make the assumption that there is only one relevant conservation law, with associated current $J^a$. Hence, imposing that the thermodynamic state of the medium can be completely characterised using only the scalar field $T$, and an auxiliary field $q$ (the heat flux), we postulate the following non-equilibrium constitutive theory\footnote{For simplicity, we are also assuming that the equilibrium equation of state of the material is $S=C_v \ln T$, so that the equilibrium parts of $s^0$ and $J^0$ are respectively $c_v \ln T$ and $c_v T$.} \cite{GavassinoUEIT2021}:
\begin{equation}\label{Jasa}
J^a = 
\begin{pmatrix}
c_v T   \\
q \\
\end{pmatrix}
\spc 
s^a =
\begin{pmatrix}
c_v \ln T -Bq^2/2  \\
q/T \\
\end{pmatrix}
\spc \text{with} \quad c_v,B = \text{const} >0 \, .
\end{equation}
The dynamics is then completely determined by the rate-equations
\begin{equation}\label{Fondamentali}
\nabla_a J^a =0  \spc \nabla_a s^a = \dfrac{q^2}{\kappa T^2} \spc \text{with} \quad \kappa=\text{const} >0 \, ,
\end{equation}
which explicitly read
\begin{equation}\label{buonalaprima}
c_v \partial_t T +\partial_x q =0  \spc BT^2 \partial_t q +\dfrac{q}{\kappa} +\partial_x T =0 \, .
\end{equation}
If we linearise these two equations around an equilibrium state (namely, a state with $T=\text{const}$ and $q=0$), we can combine them together to recover equation \eqref{fieldEquaz}, with
\begin{equation}\label{dsb5}
w^2 = \dfrac{1}{c_v B T^2}  \spc D = \dfrac{\kappa}{c_v} \, .
\end{equation}
Hence, the telegraph equation \eqref{fieldEquaz} is the natural dynamical equation of a thermodynamic system having the constitutive relations \eqref{Jasa}. 

Now that we have assigned some constitutive relations to the system, we can proceed to compute the entropy variation. First of all, let us compute the perturbation to the entropy current around the equilibrium state with temperature $T$ (truncating the expansion to second order in $\delta T$ and $\delta q$):
\begin{equation}\label{saaaa}
\delta s^a = \dfrac{\delta J^a}{T} - \dfrac{1}{2T^2}
\begin{pmatrix}
c_v (\delta T)^2 + BT^2 (\delta q)^2   \\
2\delta q \delta T \\
\end{pmatrix} + (\text{Third order terms}) \, .
\end{equation}
Secondly, we can use \eqref{buonalaprima} to show that, for a high-frequency wave-packet of the form \eqref{Txx}, one has
\begin{equation}\label{qaaa}
\delta q \approx wc_v \delta T \, .
\end{equation}
Finally, we can take the flux of \eqref{saaaa} across Bob's surface of contemporary events ($t_B = \text{const}$) to obtain the perturbation to the total entropy $S_B$ (as measured in Bob's frame). The result is
\begin{equation}\label{sbuz}
\delta S_B \approx \dfrac{\delta U}{T} - (1-vw)\dfrac{\gamma c_v}{T^2} \int_{t_B=\text{const}} \! \! \! \! \! \! \! \!\!  (\delta T)^2 \, dx_B \, ,
\end{equation} 
where the quantity $U$ is the conserved scalar charge associated with the current $J^a$. Now we see the problem: the entropy should be maximised at equilibrium, for a fixed value of the integral of motion $U$; in other words, we should have
\begin{equation}
\delta S_B \leq 0  \spc \text{as long as} \spc \delta U=0 \, .
\end{equation}
However, for $v > w^{-1}$, the second term on the right-hand side of \eqref{sbuz} becomes positive, meaning that the maximum entropy principle is violated in Bob's frame. Recall that we can have $v>w^{-1}$ only if $w>1$, i.e. if causality is violated.

We can finally compute the growth rate of the perturbation directly from thermodynamic considerations. Let us, first of all, define the integral
\begin{equation}
I(t_B) := \int_{\text{Bob's time }t_B}  \! \! \! \!\! \! \! \! \! \! \! \! \!\! \! \! \! \! \! \! \! \!\!  (\delta T)^2 \, dx_B = I(0) \, e^{2\Gamma_B t_B} \, ,
\end{equation}
where $\exp(\Gamma_B t_B)$ is the growth factor of $\delta T$ in Bob's frame. Taking the time-derivative of \eqref{sbuz} (recalling that $U$ is a constant of motion), we obtain
\begin{equation}\label{dsb1}
\dfrac{dS_B}{dt_B} =  - 2(1-vw)\dfrac{\gamma c_v}{T^2} I(t_B) \,  \Gamma_B \, .
\end{equation}
On the other hand, we can also compute the time-derivative of the entropy using the second equation of \eqref{Fondamentali}, together with equation \eqref{qaaa}: 
\begin{equation}\label{dsb2}
\dfrac{dS_B}{dt_B} = \dfrac{1}{\kappa T^2} \int_{\text{Bob's time }t_B}  \! \! \! \!\! \! \! \! \! \! \! \! \!\! \! \! \! \! \! \! \! \!\!  (\delta q)^2 \, dx_B = \dfrac{w^2 c_v^2 }{\kappa T^2} \,  I(t_B) \, .
\end{equation}
Comparing \eqref{dsb1} with \eqref{dsb2}, and using \eqref{dsb5}, we recover equation \eqref{gammab}, namely
\begin{equation}
\Gamma_B = -\dfrac{w^2}{2D\gamma (1-vw)} \, .
\end{equation}

\label{lastpage}

\end{document}